\theoremstyle{plain}
\newtheorem{theorem}{Theorem}
\newtheorem{proposition}[theorem]{Proposition}
\newcounter{TodoCounter}
\DeclareMathOperator*{\argmax}{arg\,max}
\renewcommand{\vec}[1]{\boldsymbol{#1}}         
\DeclarePairedDelimiter\floor{\lfloor}{\rfloor} 
\DeclareMathOperator{\EX}{\mathbb{E}}
\newcommand{\N}{\mathbb{N}}
\newcommand{\R}{\mathbb{R}}
\newcommand{\e}{\mathrm{e}}
\newcommand{\depthLimit}{d_{\max}}
\newcommand{\iterLimit}{n_{\textnormal{iter}}}
\newcommand{\maxReachedDepth}{d}
\newcommand{\currentState}{\text{state}}
\newcommand{\et}[1]{\varepsilon_{#1}}
\let\oldfrac\frac
\renewcommand{\frac}[2]{%
  \mathchoice
    {\oldfrac{#1}{#2}}
    {#1/#2}
    {#1/#2}
    {#1/#2}
}
\begin{document}
\title{
Online Dynamic Pricing for Electric Vehicle Charging Stations with Reservations
}

\author{
        Jan Mrkos\thanks{Faculty of Electrical Engineering
        Czech Technical University in Prague, 
        Karlovo náměstí 13, Prague 2, 121 35} \thanks{Corresponding author, \tt\small mrkosja1@fel.cvut.cz}, Antonín Komenda\footnotemark[1], David Fiedler\footnotemark[1], and Jiří Vokřínek\footnotemark[1]
}
\maketitle

\begin{abstract}
        This paper introduces a novel model for online dynamic pricing of electric vehicle charging services that integrates reservation, parking, and charging into a comprehensive bundle priced as a whole.
        Our approach focuses on the individual high-demand, fast-charging location, employing a Poisson process as a model of charging reservation arrivals, and develops an online dynamic pricing strategy optimized through a Markov Decision Process (MDP).
        
        A key contribution is the novel analysis of discretization error introduced when incorporating the continuous-time Poisson process into the discrete MDP framework.
        The MDP model's feasibility is demonstrated with a heuristic dynamic pricing method based on Monte-Carlo tree search, offering a viable path for real-world applications.
\end{abstract}


\section{Introduction}
\label{sec:introduction}
The mass electrification of personal transportation, concurrent with an increasing proportion of energy sources being renewable and thus, intermittent, will significantly impact the electricity grid. 
Most likely, these changes will mean that drivers will have to adapt to a new way of using their soon-to-be electric vehicles (EVs). 
Charging an EV takes more time than filling a tank with gas, and the construction of new EV charging stations (CSs) is subject to restrictions from the power grid. 
As a result, in some situations, such as in fast CSs along highways, demand may outweigh supply. 


To make the allocation of charging capacity more efficient, it is reasonable to dynamically price the charging of EVs. 
In this way, dynamic pricing can help CS operators maximize revenue and reduce congestion by responding to changing demand for charging.
In addition, this way, price signals from the energy markets and grid operators can be propagated to EV users.

However, when traveling long distances, charging an EV is already more complicated than refueling a gasoline car and requires a lot more planning. 
Adding price uncertainty to the mix could make planning and travel even more difficult. 
However, if EV charging could be reserved in advance, then the price could be fixed at the time of reservation. 
This way, the driver could plan the whole trip and know the exact cost of charging beforehand. 

Although reservations reduce uncertainty for both the EV drivers and the CS operators, they also introduce a new layer of complexity and reduce flexibility for the EV drivers.
However, this complexity can be reduced by using driver assistant technologies, such as smartphone apps, to make reservations on behalf of drivers.
For example, EV routing algorithms can plan entire trips, including charging stops, and let the driver select the best trade-off between trip time and cost~\cite{cuchyMultiObjectiveElectricVehicle2024a}. 
Reservations are a natural extension of this approach.

In this paper, we present a model of dynamic pricing of EV charging.
It relies on the full reservation of the charging service, which is priced as a single bundle including the energy delivered to the EV, reservation cost, time spent at the CS, and the parking spot.
We consider fully dynamic pricing, i.e., the price of each requested reservation can be different, reflecting the current state of the CS, the parameters of the request, and the expected future demand for charging.
Finally, we focus on revenue maximization for the seller, operator of a single charging location. 
This point of view has become more popular in recent years~\cite{fangDynamicPricingElectric2021,dahiwaleComprehensiveReviewSmart2024} as many realized that the transition to electric mobility and its sustainability is not feasible without profitable EV CSs. 
Arguably, focusing on a single location is limiting, and many works explore the problem on a much larger scale. 
However, limiting our focus makes the complexity raised by the first two points manageable.

\subsection{Literature review}

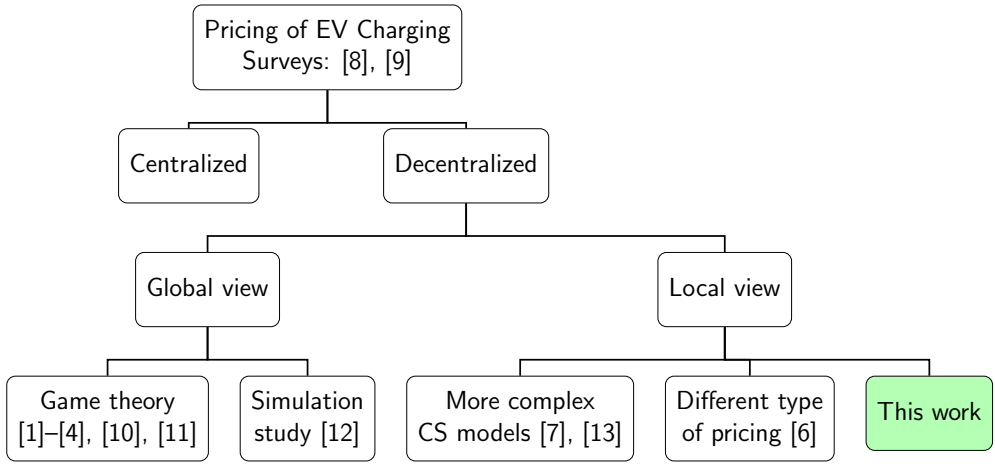
\begin{figure}
  \begin{center}
  \resizebox{0.99\textwidth}{!}{
    \tikzset{parent/.style={align=center,text width=3cm,rounded corners=3pt},
child/.style={align=center,text width=3cm,rounded corners=3pt}
}

\begin{forest}
for tree={
    rectangle, rounded corners, draw, align=center,
    text centered, 
    inner sep=5pt, 
    minimum height=1.2cm, 
    font=\sffamily, 
    edge={thick, draw}, 
    l=2cm, 
    s sep=2cm, 
    anchor=north,
    child anchor=north,
    parent anchor=south,
    edge path={
                    \noexpand\path[\forestoption{edge}]
                    (!u.parent anchor) -| +(0,-15pt) -| (.child anchor)\forestoption{edge label};
                },
  }
[Pricing of EV Charging\\ Surveys: \cite{rigasManagingElectricVehicles2015a,dahiwaleComprehensiveReviewSmart2024}
    [Centralized]
    [Decentralized, s sep=0.5cm
        [Global view, s sep=0.5cm
          [Game theory\\ \cite{yuHierarchicalGameNetworked2021,gaoPriceBasedIterativeDouble2022,houSimultaneousMultiRoundAuction2022,zhangRangeConstrainedStableFederated2024,houBiddingPreferredTiming2020, boatengAutomatedValetParking2024}]
          [Simulation\\ study \cite{seyedyazdiCombinedDriverStationInteractive2020}]
        ]
        [Local view, s sep=0.5cm 
          [More complex\\ CS models \cite{zhangOptimalChargingScheduling2019,abdalrahmanDynamicPricingDifferentiated2022}]
          [Different type\\ of pricing \cite{fangDynamicPricingElectric2021}]
          [This work, fill=green!30]
        ]
    ]
]
\end{forest}
  }
  \caption{
    Structure of the literature survey. 
    }
  \label{fig:sota_tree}
  \end{center}
\end{figure}

Pricing of EV charging is an active research topic, and many papers tackle the problem from different points of view. We provide an overview of the structure of our literature survey in \cref{fig:sota_tree}.

The works on the pricing of EV charging are commonly divided between centralized and decentralized approaches, as is also the case in the surveys \cite{rigasManagingElectricVehicles2015a,dahiwaleComprehensiveReviewSmart2024}. 
A centralized approach collects all the inputs for pricing with a central authority that then makes and distributes decisions. 
However, the decentralized approach allows participants to make decisions independently. 
As centralization is practically problematic in the context of public EV charging, many works, including ours, focus on decentralized approaches. 

One branch of decentralized research takes a ``global'' view of the EV charging ecosystem, modeling both EV drivers and CSs individually in their interactions in the transportation network.
The second branch of research, which includes this work as well, focuses on the ``local'' viewpoint of a single actor in the ecosystem, such as the CS, without the ambition of optimizing some ecosystem-wide criteria.

The first segment of papers that take the ``global'' view uses a game theoretical approach to the problem of EV charging. 
In~\cite{yuHierarchicalGameNetworked2021,gaoPriceBasedIterativeDouble2022,houSimultaneousMultiRoundAuction2022,zhangRangeConstrainedStableFederated2024,houBiddingPreferredTiming2020}, the allocation arises as part of iterative negotiation between the CSs and the EV drivers. Price changes are used as signals to influence the decisions of the drivers and to guarantee some useful properties, such as individual rationality~\cite{gaoPriceBasedIterativeDouble2022,zhangRangeConstrainedStableFederated2024} or incentive compatibility and preservation of privacy~\cite{zhangRangeConstrainedStableFederated2024}.
While these approaches do not rely on a central authority to make decisions, they require a widespread consensus on the rules and protocols of the system.
An additional drawback of this approach is that it requires many iterations of negotiation between drivers and the CSs. 
This can be addressed by having agents in the form of, e.g., smartphone applications perform the negotiation on drivers' behalf. 
Overall, these approaches are often not practically scalable, as the required consensus between CS owners and EV drivers is difficult to achieve.

Other works that take the ``global'' view of electromobility look at the EV drivers and CSs individually without applying game theoretical concepts. For example, \textcite{seyedyazdiCombinedDriverStationInteractive2020} proposes a selection algorithm for EV drivers to choose the best CS based on the trip price travel time, as well as a pricing algorithm that considers previously reserved pricing requests and day-ahead pricing of electricity. The proposed algorithms are then allowed to interact in a simulation. Simulation results show that the proposed solution improves upon the selected baselines. However, the work suffers from the same coordinational weaknesses as the game-theoretical approaches; it assumes all the EV drivers and CSs are all using the proposed algorithms. 

\subsubsection{Local view of pricing}

We follow the ``local'' branch of research, focusing on an individual CS or CS operator and their pricing strategy. 
Here, the EV drivers' needs are abstracted into a much simpler model, usually a stochastic demand model. This approach is useful when the CSs are modeled in greater detail or when the interactions are more complex. Representatives of this approach are, for example, \textcite{zhangOptimalChargingScheduling2019} and \textcite{abdalrahmanDynamicPricingDifferentiated2022}, who propose a model of pricing EV charging that considers different charging speeds and prices them differently. 

Within the local focus on the CSs point of view, we are interested in reservations and the form that the dynamic pricing of these reservations has. 

\subsubsection{Reservations}
Reservations are not common in the EV charging industry.
However, for example, CS operator EVgo\footnote{See https://www.evgo.com/reservations/} has recently introduced a reservation system for their fast CSs, with the company claiming an interest in the service by customers traveling long-distance trips~\cite{edelsteinEVFastchargingReservations2021}. These reservations are offered for a flat fee and can be made up to 24 hours in advance. 
Reservation systems are useful for CS operators as they can better plan the utilization of their resources.  For EV drivers, reservations reduce uncertainty about their trips. 

In the ``local'' literature on EV charging, reservations are not particularly common either. 
They are present implicitly or explicitly in the auction-like mechanisms that result in the allocation of the charging capacity, such as \cite{yuHierarchicalGameNetworked2021,gaoPriceBasedIterativeDouble2022,boatengAutomatedValetParking2024}.
Other works assume a less organized EV ecosystem and model reservations explicitly, such as \cite{fangDynamicPricingElectric2021,kumarChaseMeHeuristicScheme2022}. 
\textcite{fangDynamicPricingElectric2021} is of most interest to us as it studies the pricing of a single CS that offers reservations, and the pricing responds to previously reserved demand. 
 
\subsubsection{Dynamic pricing}
Dynamic pricing is a relatively common feature of papers on EV charging, as illustrated in the survey~\cite{rigasManagingElectricVehicles2015a}. 
However, there are big differences between the dynamic pricing schemes. First, different works price different products. Second, dynamic pricing can mean different things in different papers. 
Some works apply dynamic pricing to the electricity price~\cite{zhangOptimalChargingScheduling2019}. 
In that case, the common choice of pricing is extending the smart-grid concept of time-of-use pricing to EV charging~\cite{wuOnlineEVCharge2022}, which works by setting different prices per \unit{\kWh} for different time intervals, usually well ahead of time.
Other works apply dynamic pricing to the per-minute fee, which is viewed as an improvement over the current state of charging per \unit{\kWh} as it incentivizes drivers to disconnect when their vehicle is sufficiently charged~\cite{bakhtiariComparativeAnalysisTimeBased2024}.
Others don't look at the charging and instead focus on the pricing of parking reservations, e.g., ~\textcite{leiDynamicPricingReservation2017a,fangDynamicPricingElectric2021}. 

\subsubsection{Other related works}
\label{sec:other_related_work}
To apply the dynamic pricing method described in this paper, CS operators would first need to instantiate the demand models that the method uses. 
In this work, we use a Poisson process to model the charging request arrivals and simple parametric distributions to model the charging durations and request lead times (time between the request and the charging session). 
However, the pricing method does not rely on any specific form of a demand model, with the exception of the Poisson process for the charging requests arrivals, and we allow for non-homogenous (i.e. time-dependent) demand. 
The literature on forecasting the demand for EV charging is rich; for example, see survey by~\textcite{rashidComprehensiveSurveyElectric2024}. 
The most relevant to our work are then articles that focus on short- to medium-term single CS demand forecasting~\cite{orzechowskiDatadrivenFrameworkMediumterm2023}. 
Although we use parametric distributions in our experiments for the sake of simplicity, our heuristic algorithm can also accommodate other demand models, be it sampling-based probabilistic techniques~\cite{suForecastElectricVehicle2017} or black-box models based on neural networks~\cite{zhuElectricVehicleCharging2019}. 


The currently deployed pricing solutions for fast CSs vary greatly. 
\textcite{chaturvediGenerativeAIApproach2023} catalogs them and notes the fragmented and opaque nature of the pricing strategies. 
Most charging schemes are either energy-based (e.g.: \SI{0.35}[\$]{\per\kWh}), time-based (e.g.: \SI{2.00}[\$]{\per\hour}), or a combination of different rates for different durations or energy consumption (e.g.: \SI{0.80}[\$]{\per\hour} for first 4 hours, \SI{5.00}[\$]{\per\hour} after that). 
However, free charging is the most common across all CSs but is likely not sustainable in the long term. 
Flatrate, either per hour or \unit{kWh}, is the second most common.

In practice, there are many fees that are associated with charging but are charged separately, such as the \emph{connection fee} for starting the charging sessions, \emph{parking fee} for time spent at the charging location, and \emph{idle fee} for time spent connected to the charger after the charging session is finished, which is meant to incentivize drivers to disconnect after their session is finished. 
These fees vary and are sometimes voided if, for example, the driver charges a sufficient amount of charge. 
Our model proposes a significant simplification over these paid pricing schemes, by bundling all the fees into a single price for the charging session known ahead of time.

\subsection{Contributions}
In this work, we propose an EV charging system with full reservations (following the taxonomy proposed in \textcite{basmadjianInteroperableReservationSystem2019}), and we price these reservations as a complete product, including the reservation, delivered electricity, and the parking spot. 
Our approach is most similar to \textcite{boatengAutomatedValetParking2024}, who let drivers reserve CSs before their arrival at the charging location and propose a dynamic pricing scheme to manage demand.
However, \textcite{boatengAutomatedValetParking2024} focuses on autonomous vehicles and takes the global view of the problem, with both customers and service provider constantly adjusting their strategies to maximize their individual utilities. 

When focusing on pricing, our work is most similar to ~\textcite{fangDynamicPricingElectric2021}, who take the local approach to pricing. 
Their pricing is dynamic in the sense that the price changes in 1-hour timesteps based on demand. 
The price updates are broadcast to all potential customers. 
In contrast, our protocol has the customers request charging reservations, and the station responds with a price.

This paper is a substantial extension of our previous conference paper~\cite{mrkosRevenueMaximizationElectric2018} that experiments with using much simpler \MDP/ models for dynamic pricing of EV charging, and significant rework and extension of journal publication~\cite{mrkosDynamicPricingCharging2022} with similar focus, that lays out basic parts of the \MDP/ model and parts of the solution methods used here.

The main contributions of this work are:
\begin{enumerate}
  \item We developed a new \MDP/-based model for dynamic pricing of EV charging reservations for one charging location (\cref{sec:mdp_model}). 
  Unlike previous works~\cite{zhangOptimalChargingScheduling2019,wuOnlineEVCharge2022,abdalrahmanDynamicPricingDifferentiated2022,bakhtiariComparativeAnalysisTimeBased2024,leiDynamicPricingReservation2017a,fangDynamicPricingElectric2021}, our model uses fully dynamic pricing of the complete \emph{reserved EV charging service}, i.e., all parts of the service (charging, reservation, parking, etc.) are priced together, and the final price, based on the expected demand and the price of electricity, can be different for each reservation request.
  \item Since we use discrete \MDP/ with charging request arrivals modeled by the continuous-time Poisson process and thus inevitably introduce a discretization error, we define a novel metric that quantifies this approximation error (\cref{sec:transition_poisson}).
  \item To validate the feasibility of our approach, we developed a scalable heuristic solution method for the model based on Monte-Carlo tree search for pricing of the EV charging service reservations (\cref{sec:msts_pricing_algorithm}), and compared its performance to a number of baselines in \cref{sec:experimental_results}.
\end{enumerate}



\section{MDP model for dynamic pricing of reserved EV charging}
\label{sec:mdp_model}
In our work, the pricing of EV charging is a question of sequential pricing decisions for incoming charging reservation requests. As such, \MDP/ is a natural model for the problem. \MDP/ is the model of choice for other dynamic pricing works, especially when combined with reinforcement learning \cite{abdalrahmanDynamicPricingDifferentiated2022,seyedyazdiCombinedDriverStationInteractive2020,zhangDeepRLBasedAlgorithm2022}. 
In these works, the state space is constructed with as many state variables as possible, and the pricing policy is learned from the data.
However, we avoid the use of reinforcement learning in our work due to the large computing requirements to train the model and the currently low availability of representative data to train the model on. Instead, we focus on a concise model of the problem and a scalable heuristic solution method.

Lastly, the demand model in our work uses a Poisson process to model the arrival of the charging requests. This is a relatively common choice in the literature on dynamic pricing of EV charging, used by \cite{zhangOptimalChargingScheduling2019,seyedyazdiCombinedDriverStationInteractive2020,fangDynamicPricingElectric2021}. By applying the Poisson process to our discrete-time model of online dynamic pricing, we introduce a discretization error. This error is described and quantified in \cref{sec:transition_poisson}.

\begin{figure*}
  \centering
  \def\svgwidth{0.99\textwidth}
  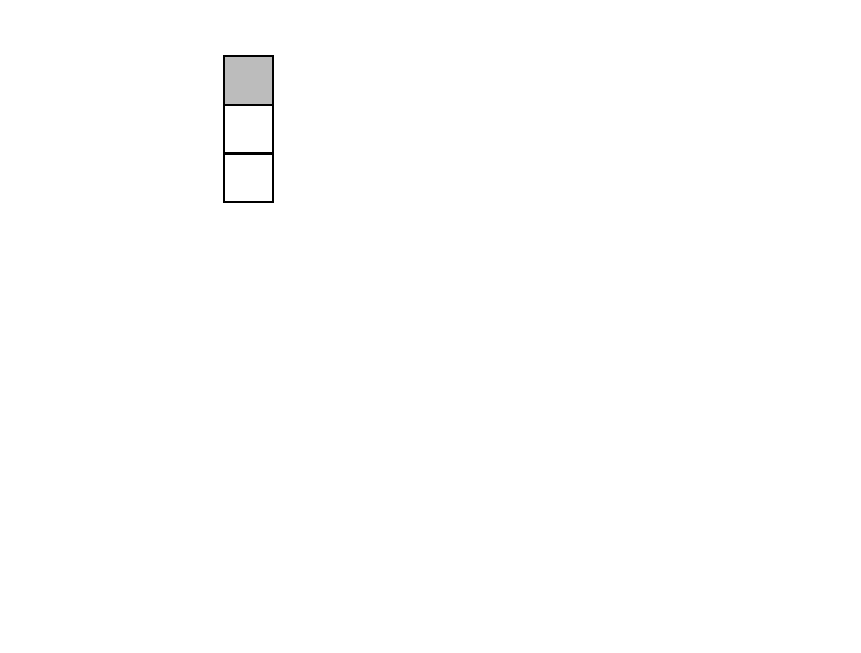
  \caption{
    Illustration of the online dynamic pricing of \EV/ charging. The charging products being priced are vertical columns in the resource matrix on top. For example, product \([1,1,0]\) represents consecutive charging in the first time slot (resource) and second time slot out of the three. In the \EV/ charging problem, we assume each product contains every resource (the charging slot capacity in a time interval) only once and that products contain only consecutive time intervals (i.e., there is no product \([1,0,1]\)). Below the matrix, the \emph{request arrivals and budgets} show customers requesting different products at different times and their budgets (\(\bdg_1\) to \(\bdg_7\)). The next line below shows \emph{pricing actions} \(a_1\) to \(a_7\) taken by the seller. When a customer accepts the price (i.e., when \(\bdg_i \ge a_i\), shown as green \checkmark), the seller accumulates a reward \(r_i\), and his resource capacity is reduced by the product requirements, this is shown on the two bottom lines.
  } 
  \label{dynamic_pricing_model}
\end{figure*}

\subsection{Assumptions and Real-world Application}
The main application of our pricing model is fast CSs along highways. 
We assume these CSs to have a limited number of chargers and a high demand for charging services. 
Additionally, we assume that drivers depart the CS shortly after the charging session is finished. 
This assumption is reasonable for highway CSs, but might not hold for destination chargers, without additional incentives.  
We also assume that drivers request an appropriate number of charging timeslots for their needs, that they arrive on time or shortly after the reservation, and that they leave shortly before the reservation ends.
We do not address the issue of cancellations or modifications of the reservations, not of strategic driver behavior (e.g., drivers reserving multiple charging sessions without intending to use all of them). 

On the electricity supply side, in our experiments, we assume that the charging capacity of the charging location is only limited by the number of available charging slots, and we do not model the power curves of the ongoing charging sessions against changing power availability from the grid or changes in the price of electricity. 
However, the model is extensible and changes in power availability and electricity prices can be incorporated in a straightforward fashion. 
The specific implementation would depend on the type of the specifics of the pricing scheme used to price electricity and the protocol for power availability adjustments. 
Since neither of these are standardized, we leave this extension of these features for future work.

\subsection{Formalization}
In this work, dynamic pricing of EV charging involves setting optimal or near-optimal prices online for different charging products as customer requests for these products arrive, one after another. The seller, the CS operator, combines his resources, the charging capacity of a set of CSs, into products offered to customers, such as half-day charging of an \EV/. 
In our case, the product is a charging session, and the different resources it consists of are charging capacities in different time slots. 

Although the demand for these products is unknown beforehand, the seller has historical data on customer request arrivals and a model of customer responses to changing prices. 
The seller's goal is to price each arriving product request in a way that maximizes its objective function, such as revenue, considering demand uncertainty. 
This optimization occurs over a finite time horizon, after which the resources can no longer be sold. 

Here, we give a minimalist formal description of the pricing problem by considering first the supply side that puts constraints on the seller and the demand side that prompts the seller's actions. The description is illustrated by \cref{dynamic_pricing_model}.

The supply side of the problem is made up of a set of \(n\) resources that can be combined into \(m\) products available for sale. Each product is represented by a vector \(\prd\in\N^n_0\), elements of which prescribe the number of individual resources used in the product. The availability of these products is constrained by the initial capacity of the resources \(\cpc_0 \in \N^n \) and the lengths of the selling periods of different resources \(\vec{T} \in \R^+\) that determine the time after which each resource and product of which it is a part can no longer be sold.  

The demand side of the problem is modeled by a non-homogeneous, compound Poisson counting process \(N(\tau)\), \(\tau\in(0, \max(\vec{T}))\) that models the arrivals of requests for different products in time and distributions of finite internal customer valuations for different products, \(\{\BDG_{\prd} | \prd \in \PRD \}\). Customers accept the price offered for the requested product if it is below their internal valuation. Otherwise, they reject the offer and leave the system.

Realized demand takes the form of a sequence of timestamped product requests (pair \((\prd, \tau)\)) associated with hidden customer valuations of products, \(\bdg_i \sim \BDG_{\prd_i}\): 
\begin{equation}
  \label{eq:request_sequence}
    \CST = ((\prd_1, \tau_1, \bdg_1), (\prd_2, \tau_2, \bdg_2), \ldots )
\end{equation}
The service provider needs to process these requests individually as they arrive. The protocol for a single interaction between the seller and customers is following:
\begin{enumerate}
  \item An \(i\)th customer requests product \(\prd_i\) from the seller at time \(\tau_i\). 
  \item If the request is feasible, that is, if \(\tau_i \le T_{\prd_i}\) and the resource capacity is greater than the products resource requirements, \(\cpc \ge \prd_i\) in all components, then the seller proposes price \(a \in A\subset\R\) to the customer. Otherwise, the seller rejects the request (equivalent to proposing an infinite price for the product).
  \item The customer compares the price \(a\) against their internal valuation of the product \(\bdg_i\), and if \(a \le \bdg_i\), then the customer pays the price and buys the product. Otherwise, the customer rejects the price \(a\) and leaves the system. 
\end{enumerate}
Crucially, the seller \emph{does not} know the values of the customer's valuations \(\bdg_i\). Each customer requests a single product and leaves the system after accepting or rejecting the offered price. 

\subsection{MDP Model}

\begin{figure}
    \centering
    \includegraphics[width=0.99\textwidth]{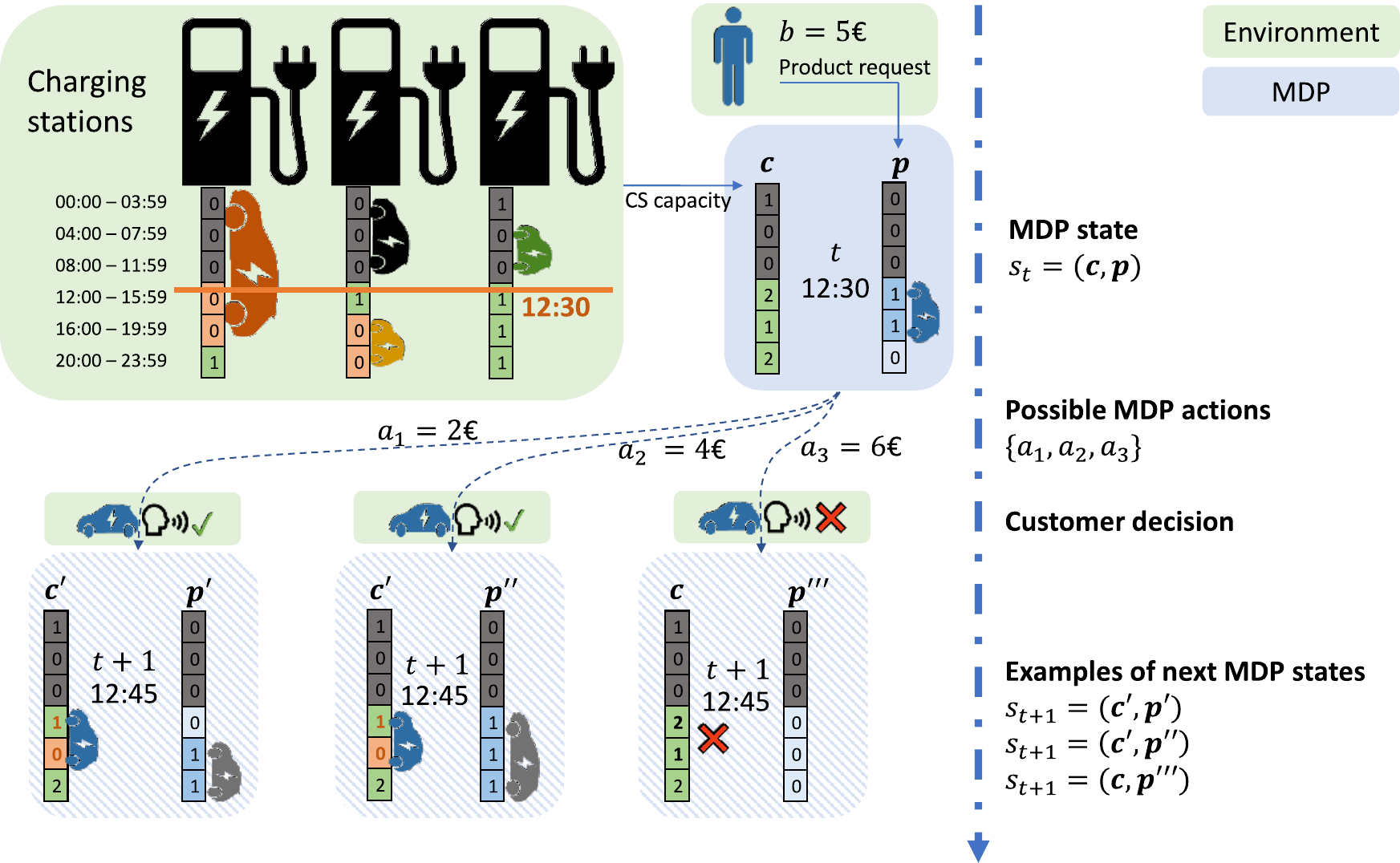}
    \caption{
    Illustration of the \MDP/ states for the dynamic pricing of \EV/ charging. Unlike \cref{dynamic_pricing_model}, this figure illustrates the expiration of resources after their selling period ends (grey in the capacity and product vectors). The blue squares represent the \MDP/ states. At timestep \(t\), the capacity of the CS is expressed by the capacity vector \(\cpc_t\). Elements of the vector represent available charging capacity in corresponding timeslots (time ranges in the green square). Possible charging session reservation requests arriving since the previous timestep is expressed by the vector \(\prd_t\), with ones representing the requested timeslots. Based on the three state variables \(\cpc_t, t, \prd_t\), the pricing policy provides an action \(a\), the price for charging, that the user either accepts (the first two states at the bottom) or rejects (the state on the right). 
    The state then transitions into the next timestep (details of the transition function are illustrated by \cref{fig:decision_tree}). The accepted charging request leads to reduced capacity values. The next charging session reservation is entered into the new state. Note that the timesteps must have much finer resolution than the charging timeslots. The gray color shows past information regarding the charging capacity and session vectors \(\cpc_t\) and \(\prd_t\), respectively.}
    \label{fig:cs_to_mdp_state}
\end{figure}

\begin{figure}
  \begin{center}
  \resizebox{0.99\textwidth}{!}{
    \tikzset{
  treenode/.style = {shape=rectangle, rounded corners,
                     draw, align=center,
                     top color=white, bottom color=blue!20},
  state_node/.style     = {treenode, font=\normalsize, bottom color=red!30},
  chance_node/.style      = {treenode, font=\normalsize},
  dummy/.style    = {circle,draw}
}

\begin{tikzpicture}
  [
    grow                    = down,
    level 1/.style           = {sibling distance = 7em}, 
    level 2/.style           = {sibling distance = 14em}, 
    level 3/.style           = {sibling distance = 7em}, 
    level distance          = 6em, 
    edge from parent/.style = {draw, -latex},
    every node/.style       = {font=\footnotesize},
    sloped
  ]
  \node (state_0) [state_node] {\(s_t=(\cpc, \prd)\)}
    child { node [chance_node] {\(p_{\text{acc}}(\prd,a) = 1-F_{\BDG_{\prd}}(a)\)}
        child { node [chance_node] {\(p_{\text{req}}(\prd') = \lambda_{\prd'}/h \)}
            child { node [state_node] {\(s_{t+1}=(\cpc', \prd')\)}
                edge from parent node [above] {\(p_{req}(\prd')\)}
                                 node [below] {\(\prd \leftarrow \prd'\)} }
            child { node {\(\cdots\)} }
            child { node [state_node] {\(s_{t+1}=(\cpc', \prd'')\)}
                edge from parent node [above] {\(p_{req}(\prd'')\)}
                                 node [below] {\(\prd \leftarrow \prd''\)} }
            edge from parent node [above] {\(p_{acc}\)}
                             node [below] {\(\cpc' \leftarrow \cpc-\prd\)} 
            }
        child { node {\(\vdots\)}
            edge from parent node [above, align=center] {\(1-p_{acc}\)}
                             node [below] {\(\cpc' \leftarrow \cpc\)}
            }
        edge from parent node [above] {action \(a\)}}
    child { node {\(\cdots\)} }
    child { node {\(\vdots\)}
        edge from parent node [above] {action \(a'\)}};
\end{tikzpicture}
  }
  \caption{
    The structure of the transition function \(\TR\). Given state \(s_t\), the probability of getting to the next state \(s_{t+1}\) is given by multiplying the probabilities along the edges. States are the decision nodes (in red), and chance states are in blue and contain the definition of the probability used along the edge. 
    }
  \label{fig:decision_tree}
  \end{center}
\end{figure}
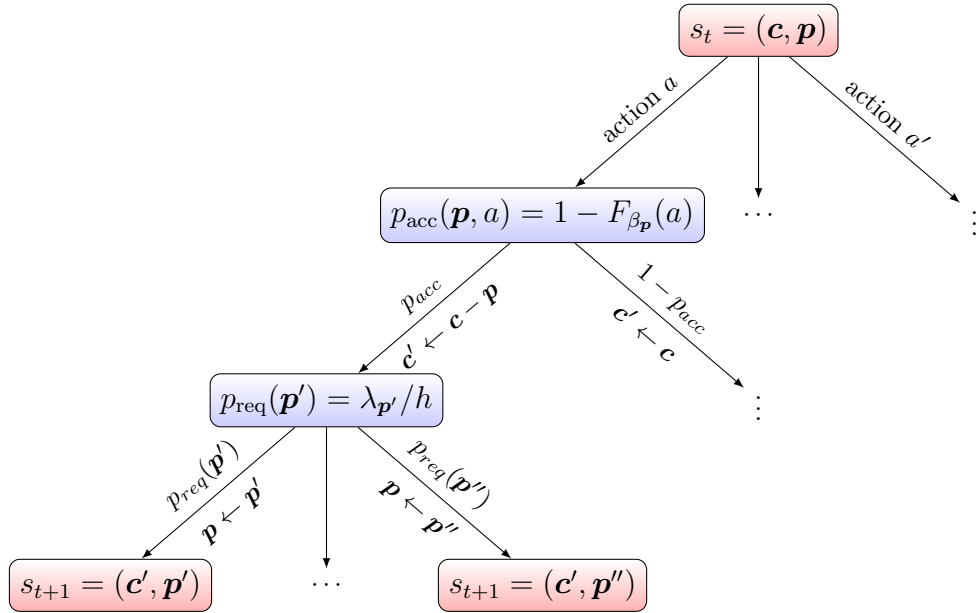

Having described the dynamic pricing problem, we will now develop the \MDP/ model to determine the pricing policy \(\pi\). 
The pricing policy is a mapping that assigns price \(a\) to the product-time pair \((\prd,\tau)\) combined with the currently available and planned future capacity \(c\)  of the seller's resources. 

In our \MDP/, defined as a 5-tuple \((\St, \TR, \Rw, \A, s_0)\),  the seller starts in some initial state \(s_0 \in \St\). Each state captures the current timestep, what product is being requested, and how many resources are currently available.  The seller offers a price \(a \in \A\) for the requested product, taking an action that results in a transition to a new state \(s' \in \St\). However, the transition is not deterministic because it is unknown whether the customer will accept or reject the price and what the next product request will be. By fitting the random demand process \(N(\tau)\) and distributions of the customer internal valuations \(\{\BDG_{\prd} | \prd \in \PRD \}\) to the historical data, we can estimate the transition probability \(\TR(s' | a, s)\), which determines the likelihood of reaching state \(s'\) when taking action \(a\) in the state \(s\). The transition between states also generates rewards for the seller, determined by the function \(\Rw(s, a, s')\). 

\subsubsection{State Space}
The \MDP/ \emph{state space} \(\St\) consists of states \(s = (\cpc, t, \prd)\) (we also use the notation \(s_t = (\cpc, \prd)\)). That is, the state is defined by the supply of all the resources \(\cpc\) at time step \(t\) and the product \(\prd\) being requested by some customer at time step \(t\). 
By discretizing the selling period \((0, \max(\vec{T}))\) into \(k\) time steps, we make sure the state space is finite. 

While continuous-time \MDP/ formulation~\cite{guoContinuousTimeMarkovDecision2009} is possible, it complicates the description of the problem, making it less intuitive and the solution more complex. 
The arrival time of customer product requests is continuous; the service provider can't influence these arrival times. However, they arrive as discrete events. 
Additionally, in pricing problems we are interested in, we assume that customer product requests arriving at similar times will mostly exhibit similar types of behavior. 
That is to say that we expect demand to depend on time in a piecewise continuous fashion with a finite number of discontinuities, where the discretization can be made to match these discontinuities. 
Additionally, as discussed in~\cref{sec:transition_poisson}, the discretization only needs to be fine enough to ensure that the probability of multiple requests arriving in a single timestep remains low.



The time step \(t\) increases by one with every transition, so \(s_t = (\cpc, \prd)\) is always followed by some \(s_{t+1} = (\cpc', \prd)\).

\subsubsection{Action Space}
The MDP \emph{action space} \(\A \subset \R\) is a set of possible prices that the seller can offer customers. This set can principally be continuous, but we assume the prices to form a finite set for our experiments. 


Similarly, as with continuous time, our model could accommodate for continuous action spaces~\cite{leeMonteCarloTreeSearch2020}. Doing so could improve the sellers' objectives, as the proposed prices could get closer to the internal customer's product valuation $b$. Most current service providers\footnote{Recent exception in the context of dynamic pricing in network revenue management is Lufthansa, which introduced continuous pricing in 2020~\cite{wenkertContinousPricing2020}. The reasons service providers use discrete prices in dynamic pricing are part technical (systems in airline revenue management have historically used discrete prices), part psychological/marketing to simplify customer choices.} selling directly to customers, however, offer discrete service prices. Therefore, it seems to be a natural simplification in our case as well. Furthermore, where possible, we compare our heuristic solution to optimal baselines only applicable to finite action spaces.

One special price is the infinite price $a=\infty$, which effectively forms the \emph{reject} action since no customer will have infinite funds. This action is used when the seller lacks the resources to provide the requested product.

\subsubsection{Reward Function}
The MDP \emph{reward function} \(\Rw(s_t, a, s_{t+1})\) determines the reward obtained by transitioning from \(s_t\) to \(s_{t+1}\) by taking action \(a\). If the seller's goal is revenue maximization, the reward is the price offered for the product. 
The reward has the value of the action \(a\) if the customer accepts the offered price \(a\) and \(0\) otherwise. Formally:
\[
  \Rw(s_t, a, s_{t+1})= 
    \begin{cases}
      a,  & \textnormal{if } \bdg > a\\ 
      0,  & \textnormal{otherwise}
    \end{cases}
\]
Here, \(a\) on the right-hand side of the equation is the value of the action \(a\). 
\(\bdg>a\) means that the customers budget \(\bdg\) is greater than the price \(a\), meaning customer accepts the price. 
Note that a successful sale implies capacity is reduced between \(s_t=(\cpc, \prd)\) and \(s_{t+1}\) from \(\cpc\) to \(\cpc-\prd\) in \(s_{t+1}\), which brings us to the definition of the transition function.



Depending on the CS operator's objectives, the reward function can have a different form, e.g., maximizing profits by subtracting the cost of providing the service from the revenue. 
In case of a public utility, the goal might be to maximize the utilization of the charging resources. 
However, utilization and revenue are not necessarily aligned, and optimizing both criteria simultaneously would require a multi-objective optimization approach.

\subsubsection{Transition Function}
The \emph{transition function} \(\TR(s_t, a, s_{t+1})\) is the most complex component of the \MDP/ model. It determines the state \(s_{t+1}\) the system develops into from state \(s_t\) when the service provider takes action \(a\). The transition function \(\TR\) is determined by two factors: the customer arrival processes \(\DEM(t)\) and the distributions of customer internal valuations \(\{\BDG_{\prd} | \prd \in \PRD \}\). The structure of the transition function and how it combines these two components is shown in \cref{fig:decision_tree}.

In some state \(s_t = (\cpc, \prd)\) (the root of the tree in \cref{fig:decision_tree}), the seller picks a price \(a\) for product \(\prd\) requested by a customer with a hidden internal valuation of the product. 
Since the customer accepts the offered price only if his internal valuation (modeled by a random variable \(X \sim \BDG_{\prd}\) of the product \(\prd\)) is greater than the offered price \(a\), the probability of a customer accepting the offered price is given by the complementary cumulative density function \(F_{\BDG_{\prd}}\) of \(\BDG_{\prd}\) as 
\begin{equation}
  \label{eq:p_acc}
  p_{\textnormal{acc}}(\prd, a) = P(X>a) = 1-F_{\BDG_{\prd}}(a)
\end{equation}
This is shown in the second level of the tree in \cref{fig:decision_tree}. The budget distribution \(\BDG_{\prd}\) could also be time-dependent, but we assume it is not for simplicity.

Independently of whether the product \(\prd\) is sold at time step \(t\) in \cref{fig:decision_tree}, some product \(\prd'\) could be requested at time step \(t+1\). 
The demand model \(D(t)\) that determines the probability of a product request at time step \(t\) is derived from the compound Poisson counting process \(N(\tau)\) with rate \(\lambda\).

The choice of the Poisson process to model arrival is a natural consequence of the so-called memoryless property that assumes that, at any point, the time until the next customer request does not depend on how much time has passed since the last customer request. Fortunately, this assumption holds true for many pricing problems since the assumed customer populations are large and customers act independently. For this reason, as well as its simplicity and useful properties, the Poisson process is a popular choice for modeling customer arrivals.

In our case, \(N(\tau)\) counts the arrival of a request for any product, and is created by merging independent, product-specific Poisson counting processes \(N_{\prd}(\tau)\) with rates \(\lambda_{\prd}\) into a single combined process. 

For the sake of explanation and without the loss of generality, we assume these processes are homogenous, i.e., the intensity \(\lambda\) does not depend on time.
Thus, the intensity of each product subprocess is \(\lambda_{\prd}\) and is constant, and from the properties of the Poisson process, we have \(\sum_{\prd\in\PRD}\lambda_{\prd} = \lambda\).

However, in our \MDP/ definition, we discretize the selling period \((0,T)\) into \(k\) intervals, the timesteps\footnote{\emph{Timesteps} are not to be confused with the \emph{timeslots} that form the resources used in the charging products, and that can have coarser discretization.}. Assuming for now that each timestep in the discretization has constant length \(\frac{T}{k}\), we approximate the Poisson process with a discrete demand process \(\DEM(t)\), \(t\in\{1,2,3,\ldots,k\}\). \(\DEM(t)\) gives the probability of product arrivals in each timestep. However, it allows for at most one product to be requested at any timestep. \(\DEM(t)\) is a multi-class extension of the Bernoulli process with \(|\PRD|+1\) possible outcomes, with \(+1\) for \emph{no} (empty) product request arriving in a timestep. The probabilities of the different products at timestep \(t\) in \(\DEM(t)\) are chosen in the following way: 
\begin{equation}
  \label{eq:p_req}
  p_{\textnormal{req}}(\prd, t) =   
  \begin{cases}
    \frac{\lambda_{\prd}}{k},                   & \prd \in \PRD\\
    1-\sum_{j \in \PRD}(\frac{\lambda_j}{k}),  & \prd=\emptyset
  \end{cases}
\end{equation}
See \cref{sec:transition_poisson} for details on how the discrete demand process with product request probabilities chosen this way behaves concerning the compound Poisson process \(N(\tau)\) and what is the quality of this approximation.

There is one issue when using the Poisson process with discretized time:  in discrete time, process \(\DEM(t)\) can generate at most \(1\) request in any interval of the discretization. In contrast, the continuous-time process \(N(\tau)\) can generate any positive number of requests in \emph{any} interval of non-zero length. 
Therefore, multiple charging requests can arrive at the same time step. 
If the requests arrive at the same time exactly, a randomly selected request will be processed first.
Otherwise, the requests are processed in the order of their arrival.
The first request is priced as usual, using the pricing policy for state \(s_t\). 
The second request is then simply priced with a policy based on the same timestep, in a state \(s'_{t}\), which corresponds to the same timestep, but possibly has reduced capacity.

Since the pricing policy is based on an \MDP/ formulation that does not consider multiple requests in the same timestep, even an optimal policy for this \MDP/ will result in suboptimal pricing in practice. 
However, by keeping the approximation error of the time discretization small, we can ensure that this situation is rare.
We consider the approximation of \(N(\tau)\) by \(\DEM(t)\) to be acceptable for a given number of timesteps \(k\) and the expected number of requests \(\lambda T\) under one condition: we want the continuous time process \(N(\tau)\) to rarely generate more than one request in any interval of the discretization.
Specifically, we want to pick \(k\) so that the error term \(\et{2}(k, \lambda)/\lambda\) is small. 
In the next section, we explain what we mean by \(\et{2}\) and justify our choice of approximation.


\section{Properties of the Demand Approximation in the MDP Model}
\label{sec:transition_poisson}


Here, we formalize the definition of the discrete demand process used in our \MDP/ model and quantify how well it approximates the assumed Poisson demand process. 

\subsection{Convergence of the Discrete Demand Process}
In this section, we will assume that the selling period \((0,T)\) is a unit interval \((0,1)\), which is without a loss of generality through simple rescaling of the timeline. Next, let us describe the Poisson demand process obtained by combining the Poisson sub-processes for each product. We assume there is a Poisson counting process \(N_{\prd}(\tau)\) for each product \(\prd\in\PRD\), defined by the rate \(\lambda_{\prd}\), generating arrival times of requests for that specific product. From the convenient properties of Poisson processes, the arrival times of all product requests can be considered as coming from a single compound Poisson process \(N(\tau)\) with intensity \(\lambda=\sum_{\prd\in\PRD}\lambda_{\prd}\). The compound Poisson process generates arrivals of requests for any product. 

However, since we discretize the time into \(k\) timesteps, we have to approximate the arrivals generated by the Poisson process by the discrete Bernoulli process. The Bernoulli process is a sequence of Bernoulli trials, defined by the number of trials \(k\) and the probability of arrival of any request in a single trial \(p\). The approximation is based on the fact that the Poisson process is a limit of a sequence of Bernoulli processes created by keeping the product \(kp=\lambda\) constant and taking \(k \rightarrow +\infty\). 

We can reconstruct the arrival process for individual products by assigning product indices to the arrivals according to the probabilities \(\frac{\lambda_1}{\lambda}, \frac{\lambda_2}{\lambda}, \ldots, \frac{\lambda_m}{\lambda}\). 
Arrival in the Bernoulli process can then be assigned a product type by sampling a discrete distribution with probabilities \(\frac{\lambda_1}{\lambda}, \frac{\lambda_2}{\lambda}, \ldots, \frac{\lambda_m}{\lambda}\), one for each product. We refer to the resulting object as \emph{discrete demand process} with \(m+1\) outcomes, where \(m\) outcomes correspond to the \(m\) possible products and \(m+1\) outcome corresponds to no request being made. We call this process the discrete demand process \(\DEM_{\vec{\lambda}}^k(t)\) in \(k\) time steps and demand intensities \(\vec{\lambda} = [\lambda_1, \ldots, \lambda_m]\). 

\begin{proposition}
\label{prop:convergence}
For given \(\vec{\lambda} = [\lambda_1, \ldots, \lambda_m]\), let \(\DEM^k_{\vec{\lambda}}(t)\) be a discrete demand process with \(k\) steps, \(k \ge \sum_{i=1}^m \lambda_i\), and \(m+1\) distinct possible values with outcomes \(i \in \{1, ...,m, \emptyset\} \) occurring with probability:
\[
p_i = 
\begin{cases}
  \frac{\lambda_i}{k}, & i \in \{1, ..., m\}\\
  1-\sum_{i=1}^m \frac{\lambda_i}{k}, & i = \emptyset
\end{cases}
\]  
Then, the sequence of the discrete demand processes \(\DEM^k_{\vec{\lambda}}(t)\) converges with \(k\rightarrow+\infty\) to the compound Poisson process with arrival intensity \(\lambda=\sum_{i=1}^m \lambda_i\) and discrete jump size distribution with probabilities \(\frac{\lambda_1}{\lambda}, \frac{\lambda_2}{\lambda}, \ldots, \frac{\lambda_m}{\lambda}\).
\end{proposition}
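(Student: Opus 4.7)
The plan is to reduce the statement to the classical Poisson limit theorem combined with the thinning/superposition property of Poisson processes. I would first observe that the per-slot probabilities factor as $p_i = (\lambda/k)\cdot(\lambda_i/\lambda)$ for $i\in\{1,\dots,m\}$, where $\lambda=\sum_{i=1}^m \lambda_i$. This lets me reinterpret $\DEM^k_{\vec{\lambda}}(t)$ as a two-stage mechanism: in each of the $k$ slots an arrival occurs with probability $\lambda/k$ independently (a Bernoulli trial), and conditionally on an arrival a product label in $\{1,\dots,m\}$ is drawn independently from the categorical distribution $(\lambda_1/\lambda,\dots,\lambda_m/\lambda)$. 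The hypothesis $k\ge\sum_i\lambda_i$ only guarantees $p_{\emptyset}\ge 0$ along the sequence and is automatic for large $k$.

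Next, I would handle the aggregated (unlabeled) counting process $N^k(t)=\sum_{s\le \lfloor kt\rfloor}\mathbf{1}[\text{arrival in slot }s]$. Since $N^k(t)$ is $\mathrm{Binomial}(\lfloor kt\rfloor,\lambda/k)$, the standard Poisson limit (via $\binom{\lfloor kt\rfloor}{j}(\lambda/k)^j(1-\lambda/k)^{\lfloor kt\rfloor-j}\to e^{-\lambda t}(\lambda t)^j/j!$) yields $N^k(t)\Rightarrow \mathrm{Poisson}(\lambda t)$. Independence of the slots across disjoint subintervals then gives independent Poisson increments in the limit, so the finite-dimensional distributions of $N^k$ converge to those of a homogeneous Poisson process of rate $\lambda$.

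The labeling step is then handled by the thinning/splitting property: attaching i.i.d.\ categorical labels to a Poisson process produces a marked Poisson process whose type-$i$ sub-process is Poisson with rate $\lambda_i$, with different types independent; equivalently, the aggregate plus labels is a compound Poisson process with jump distribution $(\lambda_i/\lambda)_{i=1}^m$. Because the labeling rule is identical at every $k$ and in the limit, convergence of $N^k$ upgrades to convergence of the marked process $\DEM^k_{\vec{\lambda}}$ to the claimed compound Poisson process.

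The main obstacle, and the step I would handle most carefully, is making precise what ``converges'' means. The cleanest route is convergence of finite-dimensional distributions of the multidimensional counting vector $(N_1^k(t),\dots,N_m^k(t))$, obtained by applying the multivariate Poisson limit theorem to the multinomial slot counts over each subinterval: asymptotic independence of coordinates follows because the cross-covariances vanish, $\mathrm{Cov}(N_i^k,N_j^k)=-\lfloor k\ell\rfloor\lambda_i\lambda_j/k^2\to 0$ for an interval of length $\ell$. To strengthen this to weak convergence in the Skorokhod space $D([0,1],\mathbb{N}^m)$, I would check tightness via a standard criterion for integer-valued counting processes (for instance Aldous's criterion, or a direct maximal bound on the number of jumps), which is straightforward because the limit has no fixed discontinuities and finite intensity.
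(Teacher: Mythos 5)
Your proof is correct, but it takes a genuinely different route from the paper's. The paper decomposes by product: it treats each product's arrivals as a separate Bernoulli process $B^k(t,\lambda_i/k)$, invokes the classical Bernoulli-to-Poisson limit for each, superposes the limits, and then must confront the fact that in the actual process $\DEM^k_{\vec{\lambda}}(t)$ the per-slot outcomes are mutually exclusive, so the product sub-processes are \emph{not} independent; it disposes of this by noting that the collision probability $\lambda_i\lambda_j/k^2$ vanishes faster than the arrival probabilities themselves. Your factorization $p_i=(\lambda/k)(\lambda_i/\lambda)$ inverts the decomposition --- aggregate arrivals first, then i.i.d.\ labels --- and this sidesteps the dependence issue entirely: the labels are exactly i.i.d.\ categorical at every $k$, so the only limit you need is the scalar binomial-to-Poisson convergence of the unlabeled counting process, after which thinning delivers the compound/marked Poisson structure for free. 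What your approach buys is a cleaner argument with no approximation at the labeling step; what the paper's buys is a direct view of each product's sub-process and an explicit accounting of where the discretization dependence hides (which foreshadows its later error analysis in \cref{prop:err_formula}). You are also more careful than the paper about what ``converges'' means --- the paper never specifies a topology, whereas you give finite-dimensional convergence of the vector counting process and sketch tightness in Skorokhod space. One small caveat: the vanishing cross-covariance you compute is evidence for, but not by itself a proof of, asymptotic independence of the coordinates; you should lean on the multivariate Poisson limit theorem for multinomial counts (or a characteristic-function computation) rather than the covariance alone. With that noted, the argument is sound.
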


\begin{proof}
Consider the discrete demand process for a single product, which is a Bernoulli process \(B^k(t, p_i)\) with \(k\) steps and probability of success \(p_i=\lambda_i/k\). This is a sequence of \(k\) i.i.d. binary random variables with the probability of success \(p_i\). We know that with \(k\rightarrow +\infty\), the sequence \(B^k(t, \lambda_i/k)\) converges to a Poisson process with intensity \(\lambda_i\). Combining the \(m\) Poisson processes with \(\lambda_i, i\in \{1, \ldots, m\} \) gives a Poisson process with intensity \(\lambda=\sum_{i=1}^m\lambda_i\) with the required jump size distribution by the properties of combined Poisson processes. 

Then, the issue is whether these individual Bernoulli processes combine into the required discrete demand process \(\DEM^k_{\vec{\lambda}}(t)\) and whether this process converges to the same Poisson process as the combination of their limits. 

We show this using two Bernoulli processes consisting of i.i.d. random variables \(X_i\) and \(Y_i\), \(i\le k\). The success probability in the first process is \(p_1\) and \(p_2\) in the second process. These two processes can be combined into a new Bernoulli process with success probability \(p=p_1+p_2\),  and a random variable \(Z\) that determines the success class in the combined process with class probabilities \(\frac{p_1}{p}, \frac{p_2}{p}\).  However, unlike in the case of the Poisson process, the arrival classes are no longer independent since success in \(X_j\) implies failure in \(Y_j\). For separate processes, the event \(\{X_j=1, Y_j=1\}\) for some \(j\) has probability \(p_1p_2\). Nevertheless, since we have \(p_1 = \frac{\lambda_1}{k}, p_2 = \frac{\lambda_2}{k}\), the probability is \(P(X_j=1, Y_j=1)=\frac{\lambda_1\lambda_2}{k^2}\). Since \(P(X_j=1, Y_j=1)\) goes to \(0\) with \(\frac{1}{k^2}\) while \(p_1\) and \(p_2\) go to \(0\) with \(\frac{1}{k}\), the sequence of combined Bernoulli schemes approaches the desired Poisson process in the limit \(k\rightarrow+\infty\).
\end{proof}

To summarize, we use a discrete demand process in the \MDP/ model. The demand process consists of \(k\) die-rolls with \(m+1\) sided unbalanced die. Each die roll corresponds to a random arrival of a product request in some timestep. The \(m\) sides of the die correspond to the different product requests made in different timesteps, and the \(m+1\) side represents no request done by the customer in the given step. The proposition means that the demand process defined this way approximates the assumed naturally occurring Poisson arrival processes that generate arrivals of \(m\) different products on a real timeline.

\subsection{Approximation Quality}
\begin{figure*}
  \centering
  \includegraphics[width=0.99\textwidth]{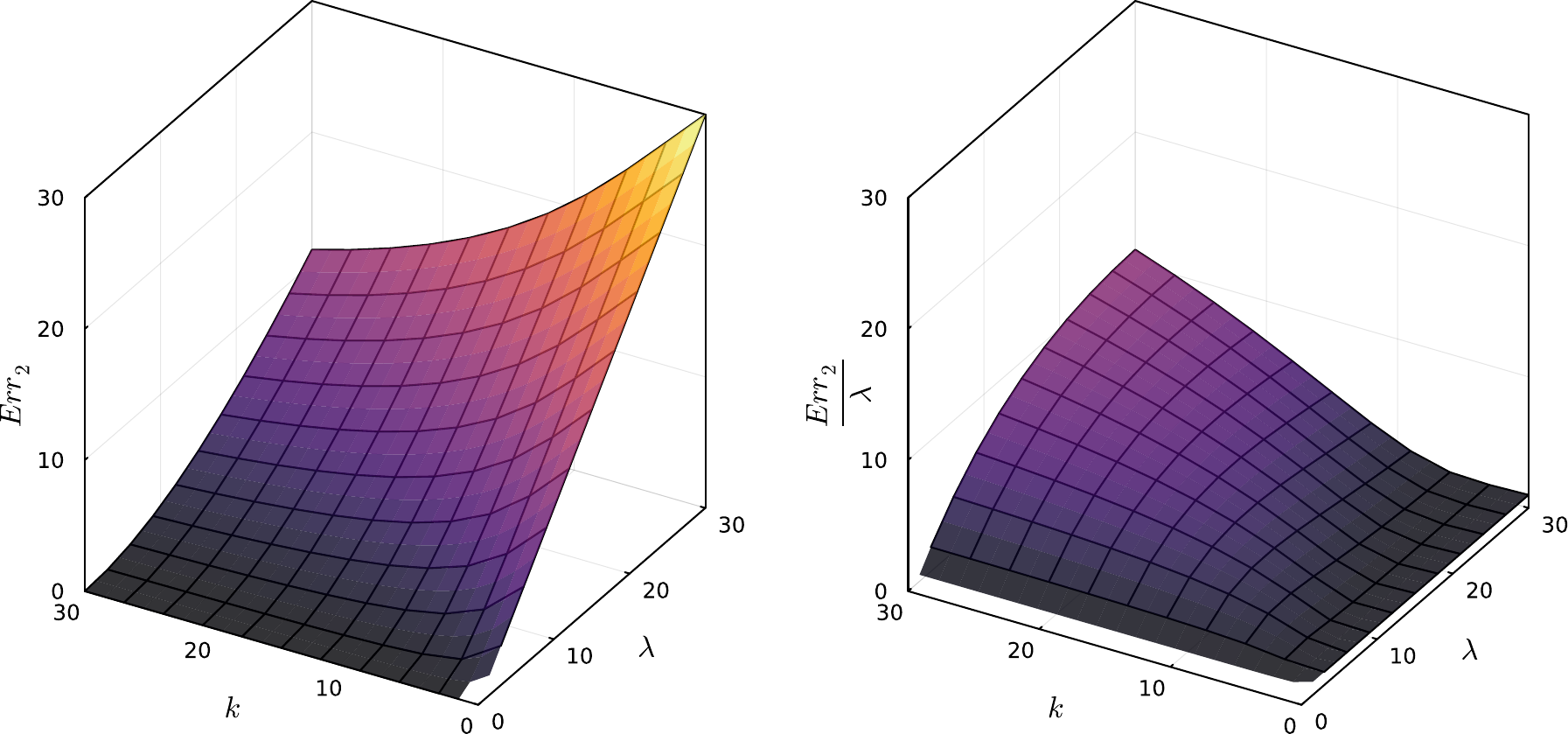}
  \caption{
    Visualization of the error term \(\et{2}(k,\lambda)\) (left) and the relative error \(\et{2}(k, \lambda)/\lambda\) (right) for different values of \(k\) and \(\lambda\). 
    \(\et{2}\) is defined in \cref{prop:err_formula} and represents the number of ignored events caused by the discretization of the Poisson process. With the number of timesteps \(k\) approaching \(0\), the number of ignored events approaches the expected number of events \(\lambda\). The relative error term \(\et{2}(k, \lambda)/\lambda\) (\cref{eq:relative_error}) represents the number of ignored events relative to the expected number of events.
  }
  \label{fig:demand_approximation_error_analytical}
\end{figure*}

The approximation quality depends on the value of \(k\) which represents the number of timesteps into which the continuous sale period \((0, 1)\) is divided. However, the discrete demand process with the probability of any arrival \(p = \sum_{i=1}^m \frac{\lambda_i}{k} \) is defined so that the expected number of requests in the \(k\) steps, which has a binomial distribution and expected value \(\EX(Bin(k, p)) = pk\), \emph{exactly} matches the expected number of arrivals from the Poisson process, \(\EX(Pois(\lambda)) = \lambda\), in the \((0,1)\) interval. Therefore, the Poisson and the approximating discrete processes do not differ in this metric.

Thus, the approximation error is more nuanced. By definition, the discrete demand process allows only for the arrival of the \(0\) or \(1\) request at each timestep. However, the Poisson process can generate more than one arrival in any real interval. This means that the discrete demand model systematically underestimates the probability of more than one arrival (effectively setting this probability to 0), while simultaneously overestimating the probability of exactly one arrival in every time step of the discretization. This overestimation and underestimation can be understood from the distribution of the number of arrivals in an interval of length \(1/k\) in the Poisson process. We will denote this random variable \(X\) and note that it is Poisson distributed\footnote{Unfortunately, the terminology here is historically misleading. In the discrete case, we have a sequence of Bernoulli-distributed random variables (sequence of coin tosses), which is called a Bernoulli process. The process's number of successes (heads) is distributed according to a binomial distribution. However, in the continuous case, we have a sequence of exponentially distributed random variables that form a Poisson process, and the number of arrivals in the process that is Poisson distributed.}:

\begin{align*}
  P(X=j) &= \frac{(\frac{\lambda}{k} )^j \e^{-\lambda/k}}{j!}
\end{align*}

We get the approximation error through the Taylor expansion of this term for different values of \(j\) corresponding to the probabilities of \(0, 1\) and \(2\) or more arrivals:

\begin{align*}
  P(X=0) &= \e^{-\lambda/k} = 1 - \frac{\lambda}{k} + o(\frac{\lambda}{k})\\
  P(X=1) &= (\frac{\lambda}{k}) \e^{-\lambda/k} = \frac{\lambda}{k} + o(\frac{\lambda}{k})\\
  P(X>1) &= o(\frac{\lambda}{k})
\end{align*}

Therefore, the approximation error in one timestep is in \(o(\frac{\lambda}{k})\) and decreases as \(k\) increases. However, we want to ensure not only that the error in one timestep is small but also that the accumulated error over all timesteps is acceptable. 

To do this, we define two error metrics. \(\et{1}\) is the expected number of intervals in the discrete demand process where the Poisson process would have more than one arrival. That is, the expected value of a binomial distribution \(Bin(p,k)\) with \(p\) being the probability of \(2\) or more arrivals from the Poisson process in an interval of length \(\frac{\lambda}{k}\).

However, \(\et{1}\) only gives an expected number of \emph{intervals} where we have a problem with more than one arrival, not the number of arrivals ignored by the discrete demand process, which could be higher. 
For this reason, we define \(\et{2}\), which gives the \emph{expected number of arrivals missed in all timesteps caused by ignoring additional (over one) arrivals in every timestep}. 
By symmetry argument from \(\EX(Pois(\lambda)) = \EX(Bin(k, \frac{\lambda}{k}))\), this is also the expected number of timesteps with one arrival added by the discrete demand process over the Poisson process.
\(\et{2}\) is formally given as:
\[
  \et{2} = \et{1}(k,\lambda) \EX_{Pois(\lambda/k)}[X-1 | X > 1]
\]
where \(X \sim Pois(\frac{\lambda}{k})\) is the Poisson distributed random variable that describes the number of arrivals in an interval of length~\(\frac{1}{k}\). \(\EX[X-1 | X > 1]\) is then the expected number of arrivals over one, conditional on more than one arrival.

The following proposition gives formulas for these two kinds of errors.

\begin{proposition}
\label{prop:err_formula}
Let \(\DEM^k_{\vec{\lambda}}(t)\) be a Bernoulli demand process that converges to a Poisson process with intensity \(\lambda\) on a unit interval when the discretization of time into \(k\) timesteps is refined in the limit \(k\rightarrow +\infty\). 

Then, the expected number of timesteps in which the Poisson process will register more than one arrival is
\begin{align}
  \et{1}(k,\lambda) &= k - (k+\lambda) \e^{-\lambda/k} \label{eq:err1_formula}
\end{align}

The expected number of arrivals over one, summed across all timesteps, is
\begin{align}
  \et{2}(k,\lambda) &= \lambda \e^{-\lambda/k} + (\lambda - k) (1-\e^{-\lambda/k}) \label{eq:err2_formula} 
\end{align}

\end{proposition}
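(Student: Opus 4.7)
The plan is to use linearity of expectation to reduce each global error to $k$ times a single-interval computation, exploiting the fact that in a Poisson process of rate $\lambda$ on the unit interval, the counts $X_1,\ldots,X_k$ in the $k$ sub-intervals of length $1/k$ are i.i.d.\ with distribution $\mathrm{Pois}(\lambda/k)$. Both $\et{1}$ and $\et{2}$ are sums of functions of these $X_i$'s, so the proof reduces to two short moment computations for $X\sim\mathrm{Pois}(\lambda/k)$.

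For \cref{eq:err1_formula}, I would observe that $\et{1}(k,\lambda)=\sum_{i=1}^k \Pr[X_i>1]=k\cdot\Pr[X>1]$. Using the Poisson pmf already displayed in the excerpt,
\[
\Pr[X>1]=1-\Pr[X=0]-\Pr[X=1]=1-\e^{-\lambda/k}-\tfrac{\lambda}{k}\e^{-\lambda/k},
\]
and multiplying by $k$ immediately yields $k-(k+\lambda)\e^{-\lambda/k}$. No subtlety here beyond substituting the two Poisson probabilities.

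For \cref{eq:err2_formula}, the cleanest approach is to rewrite the quantity from the text as $\et{2}=k\cdot\EX[(X-1)\mathbf{1}\{X>1\}]$ (equivalent to $\et{1}\cdot\EX[X-1\mid X>1]$ by the definition of conditional expectation). Then I would use the algebraic identity $(X-1)\mathbf{1}\{X>1\}=X-1+\mathbf{1}\{X=0\}$, valid because the left and right sides agree on $X=0$, $X=1$, and all $X\ge2$. Taking expectations gives
\[
\EX[(X-1)\mathbf{1}\{X>1\}]=\tfrac{\lambda}{k}-1+\e^{-\lambda/k},
\]
so $\et{2}(k,\lambda)=\lambda-k+k\e^{-\lambda/k}$.

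The final, mostly cosmetic step is showing that $\lambda-k+k\e^{-\lambda/k}$ coincides with the stated form $\lambda\e^{-\lambda/k}+(\lambda-k)(1-\e^{-\lambda/k})$; expanding the second summand and collecting the $\e^{-\lambda/k}$ terms verifies this. There is no real obstacle; the only place a reader might pause is in justifying the rearrangement of $\et{2}$ from ``$\et{1}\cdot\EX[X-1\mid X>1]$'' into a linearity-of-expectation sum, so I would state that step explicitly and note that the contributions from timesteps with $X_i\le1$ are zero, which is why the conditional form and the unconditional truncated form agree.
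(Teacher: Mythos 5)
Your proposal is correct and takes essentially the same route as the paper: both parts reduce to moment computations for \(X\sim\mathrm{Pois}(\lambda/k)\) on a single subinterval, with \(\et{1}=k\,P(X>1)\) and \(\et{2}\) obtained from the truncated first moment \(\sum_{j\ge 2}(j-1)P(X=j)\). The only difference is cosmetic: you evaluate \(k\,\EX[(X-1)\1\{X>1\}]\) directly via the pointwise identity \((X-1)\1\{X>1\}=X-1+\1\{X=0\}\), whereas the paper writes the conditional expectation as an explicit fraction over \(P(X>1)\) and then multiplies back by \(\et{1}\), so your version simply avoids a cancellation the paper performs implicitly.
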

\begin{proof}
\Cref{eq:err1_formula} is easily seen from the definition. The expected value of \(Bin(p,k)\) is \(kp\), substituting the probability of two or more events in Poisson as \(p = 1 - P(X=0) - P(X=1) = 1-\e^{-\lambda/k}-\frac{\lambda}{k} \e^{-\lambda/k}\) and simplifying the expressions immediately provides the result. 

To show \cref{eq:err2_formula}, we need to express the conditional expectation
\[\EX_{Pois(\lambda/k)} \left[ X-1 | X > 1\right] \] 
of a Poisson distributed random variable \(X\sim Pois(\frac{\lambda}{k})\):

\begin{align}
  &\EX \left[ X-1 | X > 1\right] = \sum_{j=0}^{+\infty} j P(X-1=j | X>1) \\
  &= \sum_{j=0}^{+\infty} j \frac{P(X-1=j \wedge X>1)}{P(X>1)} \\
  &= \frac{1}{1-P(X=0)-P(X=1)} \sum_{j=2}^{+\infty}(j-1)P(X=j) \\
  &= \frac{(\frac{\lambda}{k} - \lambda \e^{-\lambda/k}) - (1- \e^{-\lambda/k} - \lambda \e^{-\lambda/k})}{1- \e^{-\lambda/k} - \lambda \e^{-\lambda/k}} 
\end{align}
We use the fact that \(\EX\left[Pois(\frac{\lambda}{k})\right] = \frac{\lambda}{k}\). 
We get the desired term by multiplying the last line by \(\et{1}\) and simplifying. 
\end{proof}

In \cref{prop:convergence}, we show that the discrete demand process converges to the Poisson process when refining the discretization of time. However, it does not tell us about the quality of the approximation for a given \(k\). \Cref{prop:err_formula} explicitly quantifies the two kinds of errors in the approximation in terms of the number of timesteps \(k\) and the expected number of requests \(\lambda\). 
Since the error depends on the expected number of requests \(\lambda\), in experiments, we will use the relative error 
\begin{align}
  \label{eq:relative_error}
  \frac{\et{2}(k, \lambda)}{\lambda}
\end{align}
to quantify the quality of the approximation. The interpretation of this relative error is \emph{number of missed additional requests per expected request}. As mentioned above, in the discrete model, these requests are not actually missed; they show up as ``hallucinated'' additional requests at different timesteps.

\section{Dynamic Pricing Algorithm Using MCTS}
\label{sec:msts_pricing_algorithm}

\begin{algorithm}
  \caption{Dynamic pricing MCTS algorithm for MDPs. Based on \cite{mausamPlanningMarkovDecision2016,egorov2017pomdps}.}
  \label{alg:mcts}

\begin{algorithmic}[1]
    \Procedure{MCTS}{\(\currentState; c, \depthLimit, \iterLimit\)}
    \State $n \gets 0$
    \While{ \(n < \iterLimit\)}
        \State $n \gets n+1$
        \State $s_0\gets\currentState$ 
        \State $r_{0}\gets0$
        \State $\maxReachedDepth\gets0$
        \For{$i=1$ to \(\depthLimit\) } \Comment{Selection-Expansion loop to max. tree depth \(\depthLimit\)} \label{alg:mcts_selection_loop_start}
            \If{$s$ not encountered yet} \Comment{Expansion} \label{alg:mcts_expansion}
                \State $n_{s}\gets0$
                \For{$a\in\A$}
                    \State $n_{s,a}\gets 0$
                    \State $q_{s,a}\gets 0$
                \EndFor
            \EndIf
            \State $\maxReachedDepth \gets i$
            \If{\(\exists a \in \A \) s.t. \(n_{s,a} = 0\)} \Comment{Avoids undefined operation on line \ref{alg:mcts_ucb}}
                \State \(a_i \gets a\) \label{alg:mcts_first_action}
            \Else
                \State $a_i\gets\argmax_{a\in\A} q_{s,a} - c\sqrt{\frac{\ln(n_{s})}{n_{s,a}}}$ \Comment{Selection (using UCB1)} \label{alg:mcts_ucb}
            \EndIf
            \State $s'\gets  \TR(s'|a_i,s) $  \Comment{Sample execution of action $a$ in $s$} \label{alg:mcts_simulate}
            \State $r_{i}\gets r_{i-1}+R(s,a_i,s')$ \Comment{Cumulative reward received up to iteration \emph{i}} \label{alg:mcts_reward}
            \State $s_{i}\gets s$
            \State $s\gets s'$
            \If{$s$ terminal \textbf{or} \(n_{s,a_i}=0\)} \label{alg:mcts_terminal_condition}
                \State \textbf{break}
            \EndIf
        \EndFor\label{alg:mcts_selection_loop_end}
        \State $r_{\maxReachedDepth}\gets r_{\maxReachedDepth-1} + \Call{rollout}{s}$ \Comment{Value estimation}\label{alg:mcts_rollout}
        \For{$i=1$ to $\maxReachedDepth$} \Comment{Backpropagation}\label{alg:mcts_backpropagation}
            \State $q_{s_i,a_i} \gets \frac{n_{s_i, a_i} q_{s_i,a_i} + (r_{\maxReachedDepth} - r_{i-1})}{n_{s_i, a_i}+1}$ \Comment{Average \(q_{s,a}\) with reward from levels below} \label{alg:mcts_bacprop_q_val_update}
            \State $n_{s_i}\gets n_{s_i}+1$
            \State $n_{s_i,a_i}\gets n_{s_i,a_i}+1$
        \EndFor
    \EndWhile
    \State \textbf{return} $\argmax_{a\in\A} q_{s_0, a}$ \label{alg:mcts_action}
    \EndProcedure
\end{algorithmic}

\end{algorithm}

\begin{algorithm}
  \caption{Rollout algorithm used in \cref{alg:mcts}}\label{alg:rollout}
  \begin{algorithmic}[1]
    \Procedure{Rollout}{$\text{state}=(\cpc, t, \prd)$}
    \State $s \gets \text{state}$
    \State $r \gets 0$
    \While{$s$ is not terminal}
        \State $a'\gets $ select random action from $\A$
        \State $s'=(\cpc', t', \prd')\gets \TR(s'|a_i,s) $ \Comment{sample result of action $a'$ in $s$}
        \State $r\gets r+R(s, a', s')$ 
        \State $\Delta t \gets $ sample inter-arrival time distribution of demand at timestep $t$ \label{alg:rollout_interarrival_time}
        \State $\prd' \gets$ sample product request at $t+\Delta t$
        \State $s\gets(\cpc', t+\Delta t, \prd')$
    \EndWhile
    \State \textbf{return} $r$ 
    \EndProcedure
\end{algorithmic}

\end{algorithm}

This section describes the method we use to derive the dynamic pricing policies. Our solution method of choice for large-scale problems is \MCTS/. 
Unlike tabular methods, such as \gls{vi} or policy iteration, \MCTS/ does not need to enumerate the whole state space. 
Instead, it looks for the best action from the current state and expands only to states that the system is likely to develop into. 
However, unlike VI, for every state, \MCTS/ only approximates best actions. 
\MCTS/ improves its approximations of best action with the number of iterations. 

Nonetheless, it can be stopped at anytime to provide currently the best approximation of optimal action. These properties make it a helpful methodology in dynamic pricing. With \MCTS/, we can quickly apply changes in the environment to the solver. Even in large systems, the price offer can be generated quickly enough for a reasonable customer response time. To the best of our knowledge, this is the first attempt to solve the \EV/ charging dynamic pricing problem using \MDP/ and~\MCTS/. 


In our \MCTS/ implementation, we use the popular Upper Confidence-bound for Trees (UCT) variant of \MCTS/~\cite{auerFinitetimeAnalysisMultiarmed2002,coulomEfficientSelectivityBackup2006,mausamPlanningMarkovDecision2016} that treats each node as a bandit problem and uses the upper confidence bound formula to make the exploration-exploitation trade-off. 
We split the algorithm presentation into two parts. 
The first part covers the selection and expansion of tree nodes that form the tree policy and backpropagation. 
This part of the algorithm is shown in \cref{alg:mcts}. The second part of the algorithm is the rollout policy, which is shown in \cref{alg:rollout}.

\subsection{Tree Policy and Backpropagation}

The input of \cref{alg:mcts} is the current state for which we seek to estimate the best action \(a\). The algorithm has three parameters, the exploration constant \(c\), the number of iterations \(\iterLimit\), and the tree depth limit \(\depthLimit\). 

The input state corresponds to a tree's root from which the \MCTS/ algorithm builds the search tree. Each tree node corresponds to a state \(s\) of the \MDP/, and for each node, we keep track of how many times was the given node visited, \(n_s\), how many times was action \(a\) used in given state, \(n_{s, a}\) and a running average of the q-value of each state-action pair, \(q_{s, a}\). These values are iteratively updated during the run of the algorithm.

Each iteration of the \MCTS/ algorithm works in 4 stages: selection, expansion, value estimation, and backpropagation. The existing tree nodes are first traversed in the \emph{Selection} step (Lines \ref{alg:mcts_selection_loop_start}-\ref{alg:mcts_selection_loop_end}).
Each tree node selects an action using the UCB formula (Line \ref{alg:mcts_ucb}). The exploration constant \(c\) sets the appropriate exploration-exploitation trade-off between selecting actions with high q-value averages and actions that were not yet tried often. 
The node to transition to is determined by sampling the \MDP/ transition function \(\TR\) for the outcome of action \(a\) in the state \(s\) (Line \ref{alg:mcts_simulate})\footnote{The description of the algorithm may lead one to expect that during the selection phase of the algorithm, each selection leads to a new state and one level lower in the tree. This is not necessarily true for all \MDP/s since the transition from state \(s\) in \(i\)th step in the selection loop can result in a state encountered in some previous step. However, this cannot happen in our finite-horizon \MDP/ as it includes timestep as a state variable.}. 
If the resulting state has already initialized tree node variable \(n_s\), the same process is repeated from this node. This process is repeated for \(\depthLimit\) iterations or until a node corresponding to a terminal state is reached or until a state that does not yet have a corresponding tree node is encountered. 

When a previously unseen state is encountered, the tree is \emph{Expanded} with a new node corresponding to this state (Line \ref{alg:mcts_expansion}). Additionally, the stat-action counters and q-values are initialized for this node for all possible \(a\in\A\). However, for the freshly expanded node, the exploration term in the UCB formula on Line \ref{alg:mcts_ucb} has an undefined value. Therefore, if there is an untried action in some state \(s\), we first try any unused action \(a\) with \(n_{s,a}=0\) (Line \ref{alg:mcts_first_action}). Additionally, when we encounter an untried action, we terminate the selection-expansion stage, even if the tree depth limit has not been reached yet (Line \ref{alg:mcts_terminal_condition}). As a result, our MCTS implementation proceeds in a breadth-first manner.

During the selection phase of the algorithm, the cumulative reward collected up to \(i\)th iteration is recorded (Line \ref{alg:mcts_reward}). The estimated value of the newly explored state-action pair \(q_{s,a}\) is then calculated in the \emph{Value estimation} stage of the iteration using the rollout (Line \ref{alg:mcts_rollout}). 

The fourth stage of the \MCTS/ iteration is the \emph{backpropagation} (Lines \ref{alg:mcts_backpropagation}). Here, the cumulative rewards are used to update the average q-value of the node on the \(i\)th level of the tree, \(q_{s_i,a_i}\). The value is updated with \(r_d - r{i-1}\), \emph{all} reward collected \emph{below} the \(i\)th level in the tree, including the reward from the rollout. The update is done by averaging the previous update values with the new value (Line \ref{alg:mcts_bacprop_q_val_update}).

The four stages (selection---expansion---value estimation---backpropagation)  are repeated as many times as allowed. Finally, when the main loop terminates, the algorithm estimates the best action based on the q-value of actions in the tree's root (Line \ref{alg:mcts_action}). 

\subsection{Rollout Policy}

\Cref{alg:rollout} presents the second part of the \MCTS/ algorithm, the rollout. It is applied from state-action pairs that were used for the first time in the tree policy. The rollout approximates the reward of the selected action by quickly reaching the terminal state. In our experiments, we use the uniformly random rollout policy that applies random actions until a terminal state in the \MDP/ is reached. 

Because we use the Poisson process (Bernoulli processes after discretization) as the customer arrival process, we can speed up the rollout by sampling the time to the next arrival from the inter-arrival distribution. It has a geometric distribution in the discretization (\cref{alg:rollout_interarrival_time} in \cref{alg:rollout}). 
Therefore, we can arrive at the terminal state in fewer steps. 
In the rollout, we simulate actions without storing their resulting states until reaching a terminal state. 
When the terminal state is reached, the rollout terminates immediately, returning the accumulated reward.

\subsubsection{Implementation}
\label{sec:mcts_implementation}
Our implementation is based on the MCTS implementation in the POMDPs.jl\cite{egorov2017pomdps} library that uses recursion when traversing the tree, unlike the description in \cref{alg:mcts}. We provide the unrolled iterative description for clarity.

The implementation we use reuses the constructed decision tree between the steps of the ``real'' \MDP/ that happen outside of the \MCTS/ algorithm, improving the convergence speed. 
In our experiments, we build the tree to the maximum depth \(\depthLimit=3\) with the exploration constant set to \(c=1\). The number of iterations is capped at \(\iterLimit=800\). We find that these low numbers are sufficient for good performance in our experiments and result in a reasonable computation time.

\section{Experiments and Results}
\label{sec:experimental_results}
We compare our \MCTS/ pricing solution against multiple baselines on artificially generated problem instances modeled on a real-life CS dataset provided by a local German CS operator.
The dataset contains \num{1513} charging sessions from a single charging location spanning \num{2} years of operations, with an average charging duration of \SI{44}{minutes}. The dataset contains the start time of the charging session and the duration of charging.
The dataset is primarily used to inform the choice of distributions used in the generated problem instances.
The histograms of charging session start times and durations are shown in \cref{fig:hist_dataset}. 

As discussed in \cref{sec:introduction}, the proposed dynamic pricing method is developed with high-demand CSs in mind.
However, the available dataset is from a low-demand CS. 
Therefore, when creating problem instances for the experiments, the demand parameters are set to be much higher than the ones in the dataset, as described in \cref{tab:instance_parameters}.
Our experiments show that the kind of dynamic pricing we propose is not as useful in low-demand situations and is likely not worth the associated overhead. 
Additionally, we assume a highway charging location with multiple fast chargers, where EV drivers depart the CS voluntarily after their reserved time ends. 




The main goal of the experiments is to demonstrate the viability of the \MCTS/ dynamic pricing algorithm for \EV/ charging. To show this, we run a number of simulations with problem instances created using different problem parameters and compare the average performance of the \MCTS/ algorithm with the baseline methods. 

\subsection{Problem Instances}
\begin{table}[h]
  \centering
  \caption{Overview of the problem instance parameters.}
  \label{tab:instance_parameters}
  \centering
  \begin{tblr}{
    colspec={lXc},
  }
    \hline
    Instance property & Structure & Parameters\\
    \hline
    \textbf{Fixed parameters} & &\\
    Request length [h] & Exponential distribution & $\theta=3$\\
    Request start time [h] & Normal distribution & $\mu=12$,  $\sigma=3$ \\
    Customer budget [$1/$h] & Normal distribution & $\mu=1$,  $\sigma=0.5$\\
    Instance duration [h] & & 00:00 to 23:59\\
    \hline
    \textbf{Varied parameters} & & \\
    Request arrivals [\#/\SI{24}{\hour}] & Discretized Poisson proc. & 6 to 288\\
    Timeslot length [h] & discrete value range & \SI{15}{\minute} to \SI{12}{\hour}\\
    Timestep length [h] & discrete value range & \SI{2}{\minute} to \SI{1.5}{\hour}\\
    \hline 
  \end{tblr}
\end{table}

\begin{figure*}
  \centering
  \begin{subfigure}[t]{.4\textwidth}
    \centering
    \includegraphics[width=\textwidth]{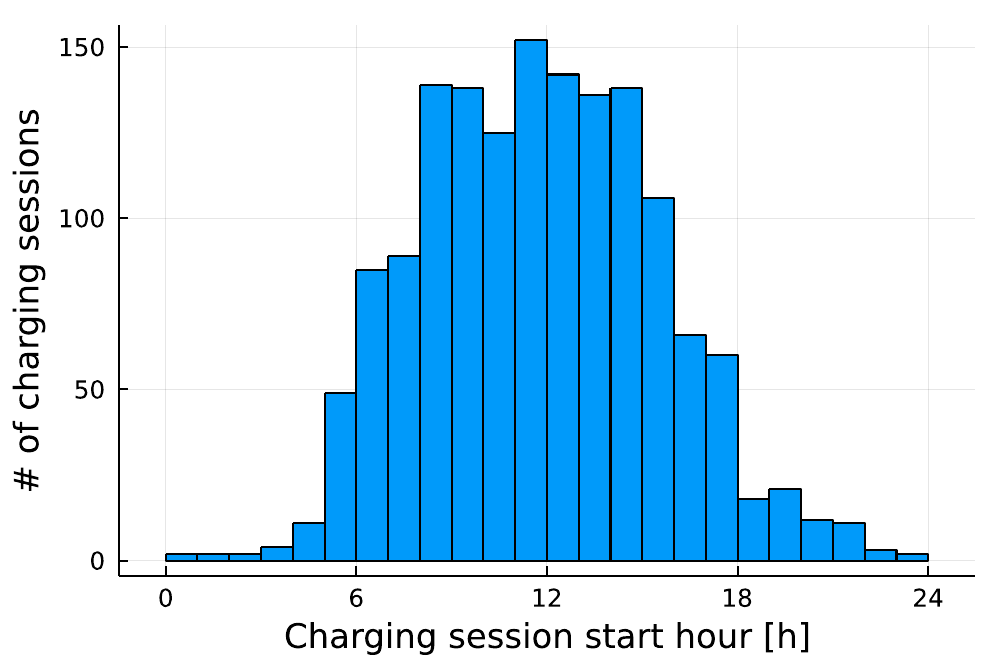}
    \caption{}
    \label{fig:hist_dataset:start_hour}
  \end{subfigure}
  \begin{subfigure}[t]{.4\textwidth}
    \centering
    \includegraphics[width=\textwidth]{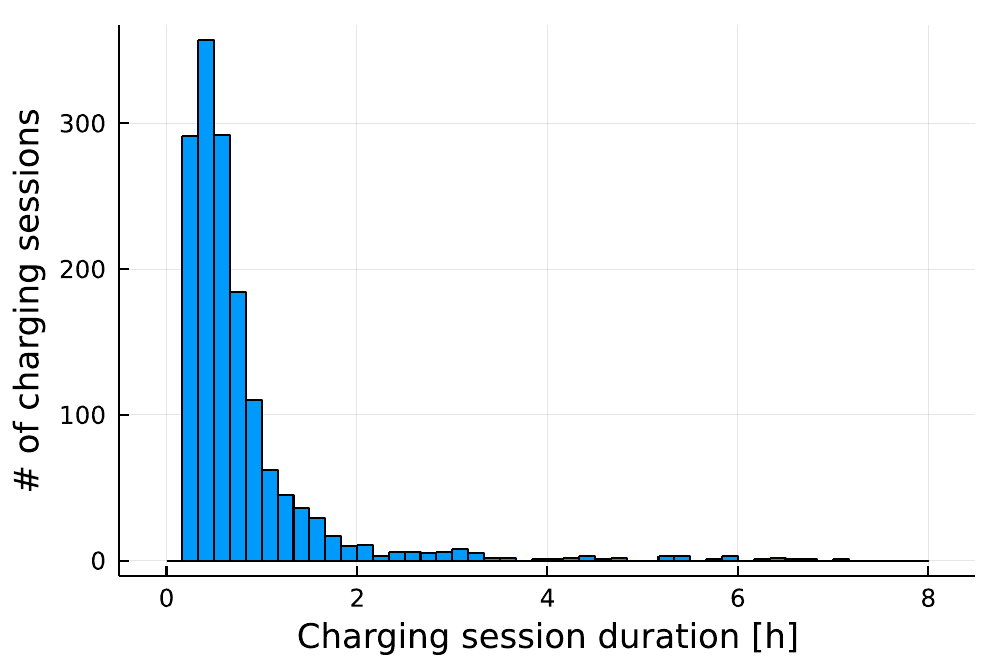}
    \caption{}
    \label{fig:hist_dataset:duration} 
  \end{subfigure}
  \caption{Histograms show the start (a) and duration (b) of charging sessions contained in the dataset that was used to select distributions for experiments.}
  \label{fig:hist_dataset}
\end{figure*}

While the generated instances are simple, the approach is flexible, and it can accommodate much more complicated inputs. We use basic distributions with parameters fitted to our  \EV/ charging dataset. The user budgets are sampled from the user budget distribution for the given charging session length. As we did not have any data on user budgets, the choice of parameters is arbitrary.  
We set the charging session start times to follow the normal distribution and the charging session length to follow the exponential distribution. We set both distributions to be independent. While this is not true in practice, it is sufficient to showcase our method and make the definition of the instances simpler. 

The request arrival times are obtained from a discretized homogeneous Poisson process, as described in \cref{sec:transition_poisson}. 
Time-variable Poisson demand process intensity that appears in the real world would be handled by a simple transformation of the timeline, resulting in discrete variable-length timesteps, which lead to a constant probability of request arrival in the discretized arrival process. 
Therefore, using a homogenous Poisson demand process in experiments is without the loss of generality of the method.  
Where available, we set their parameters based on the data analysis of our dataset; these values are collected in \cref{tab:instance_parameters}. 

Each problem instance has a form of a charging request sequence, as shown by \cref{eq:request_sequence}. 
Each pricing method, including the baseline methods described below, then prices the requests in the order they come in. 
The accumulated reward is averaged across \num{100} simulated runs to measure the performance of each method for comparison with other methods. 

\subsection{Baseline Methods}
Because of the difficulties of evaluating the dynamic pricing policies, we evaluate our proposed \MCTS/ solution against three baseline methods: \emph{flat rate}, \emph{\MDP/-optimal VI}, and \emph{Oracle} pricing methods. The flat rate represents the lower bound on the revenue we might expect from a dynamic pricing solution. The~VI baseline returns an optimal pricing policy and represents the best possible pricing method for the \MDP/ model. Finally, the Oracle policy represents the unachievable upper bound on dynamic pricing performance. Oracle provides the best possible allocation by assuming the CS operator has a perfect knowledge of future requests and \EV/ users' budgets, which is unrealistic in real-world use cases.

\subsubsection{Flatrate}
This baseline provides a lower bound for our \MCTS/ pricing method and a reference for showing how much improvement dynamic pricing could bring. 
The flatrate baseline is a single flat price per minute of charging that is used for all charging requests.

The flat price is determined from a set of ``training'' pricing problem instances that take the form of sequences shown in \cref{eq:request_sequence}. 
In each problem instance, we evaluate every possible flatrate price that corresponds to an action in the \MDP/ and measure the resulting revenue. 
The price that maximizes the average revenue across all training sequences is then used as the flatrate price in the testing simulation runs. 
This pricing method still uses reservations and the reservations are allocated in in sequence, resulting in first-requested, first-reserved allocation.

As discussed in \cref{sec:other_related_work}, the flat rate per hour or \unit{\kWh} are the most common pricing strategies. Since we make the simplifying assumption that all the charging sessions use the same constant charging power, the flatrate baseline represents both of the flat rate pricing strategies.

\subsubsection{Value Iteration (VI)} 
Our second baseline pricing method is the optimal \MDP/ policy generated by a VI algorithm~\cite{russellArtificialIntelligenceModern1995}. 
VI is a simple yet accurate method for solving \MDP/s that converges to an optimal policy for any initialization. 
The advantage of VI is that it quickly converges to a complete near-optimal pricing policy at the cost of enumerating the whole state space in memory. 

Based on the structure of our \MDP/ state, the state-space size of our \MDP/ model is \(k c_0^n 2^n\), where \(k\) is the number of timesteps, \(n\) is the number of charging timeslots and \(c_0\) is the initial charging capacity. 
If we limit the reservations to contain only the contiguous timeslots, as we do, the state-space size reduces to \(k c_0^n n(n+1)/2\). 
This gives VI an exponential space complexity in the number of timeslots. Thus, it does not scale well to larger problem instances. 
Therefore, we use VI only to obtain optimal policies on smaller problem instances to validate the heuristic approach of \MCTS/. 

Note that there are other exact solution methods for \MDP/ problems than VI, such as policy iteration or linear programming. All these methods can provide the same optimal pricing policy as VI. However, just like VI, all these methods require enumeration of the whole state space. Our choice of VI is, therefore, arbitrary in this sense. 

Since VI gives optimal pricing policy, we use it to benchmark the performance of our \MCTS/ approach, which, since it is heuristic, is expected to provide worse results. How much worse is the question we want to answer by comparing the performance of the two methods on small-enough instances that VI can solve.

\subsubsection{Oracle} Finally, we compare our \MCTS/-based pricing method against the \emph{Oracle} baseline strategy. It's important to note that the Oracle strategy, while used for benchmarking, isn't practically applicable due to its nature. Unlike other pricing strategies, Oracle relies on having prior knowledge of the entire request sequence and the budgets of \EV/ users to determine prices. 

Using this knowledge, Oracle maximizes the optimization metric to provide a theoretical upper bound on the revenue and resource usage achievable by any pricing-based allocation strategy. It works for large and small problem instances; therefore, we can use it to track the performance of \MCTS/ across a wide range of problem sizes.

The Oracle pricing solution is obtained from a linear program. For \(k\)th sequence of charging requests \(\CST_k\) with requests indexed by \(i\), the~optimum revenue is the result of a simple binary integer program:
\begin{align}
  \textnormal{maximize}   \displaystyle\sum\limits_{i \in \{1\ldots |\CST_k|\}} & x_i  \floor*{b_i}_A, \,\, \textnormal{subject to:} &   \label{eq:LP_obj} \\
  \displaystyle\sum\limits_{i \in \{1\ldots |\CST_k|\}} & x_i  \prd_i^j  \le \cpc_0^j  & j=1 ,...,  |\RES| \label{eq:LP_cpc} \\
                          & x_i \in \{0,1\}  & i=1 ,..., |\CST_k|. \nonumber         
\end{align}
where, \(x_i\) are the binary decision variables that determine which requests from \(\CST_k\) are accepted by the CS operator. In~the objective function~(\cref{eq:LP_obj}),  the~term \(\floor*{b_i}_A = \max_{a \in A, a~\le b_i} a\) denotes the fact that the budget values in the sequence \(\CST_k\) are mapped to the closest lower values in the action space \(A\).  Conditions~(\cref{eq:LP_cpc}) mean that the accepted charging sessions have to use fewer resources than the initial supply \(\cpc_0\). 

\subsection{Results}

\definecolor{area_background}{rgb}{0.9,0.9,0.9}
\begin{sidewaystable}[htpb]
  \centering
  \caption{Selected results of the experiments. Values are averaged from 100 instances.}
  \label{tab:results}
  \centering
  \begin{tblr}{
    colspec={lrrrrrrrr},
    cell{2-7}{1-Z} = {area_background},
    cell{14-19}{1-Z} = {area_background}
  }
  \hline
  Method & {Timeslot \\ len. {[min]}} & {Timestep \\ len. {[min]}} & Revenue {[1]} & \SetCell{r} Utilization {[h]} & \SetCell{r} {Average num.\\ of requests} & \SetCell{r} {Average num. of \\ accepted requests} & \SetCell{r} {Average Instance \\Runtime [s]} \\
  \hline
  \SetCell[r=6]{l} Oracle &  360 & 15.00 & 67.45 & 53.58 & 18.98 & 8.61 & 0.002 \\
  Oracle & 180 & 7.50 & 64.31 & 51.78 & 36.67 & 14.82 & 0.002 \\
  Oracle & 120 & 5.00 & 62.66 & 51.72 & 49.95 & 20.14 & 0.002 \\
  Oracle & 60 & 2.50 & 57.91 & 50.59 & 76.28 & 29.81 & 0.003 \\
  Oracle & 40 & 1.67 & 55.58 & 50.49 & 92.86 & 35.73 & 0.005 \\
  Oracle & 20 & 0.83 & 51.67 & 49.52 & 123.96 & 45.31 & 0.007 \\
  \SetCell[r=6]{l} Flatrate & 360 & 15.00 & 39.33 & 48.72 & 18.98 & 7.74 & 0.001 \\
  Flatrate & 180 & 7.50 & 38.36 & 47.52 & 36.67 & 13.16 & 0.001 \\
  Flatrate & 120 & 5.00 & 40.14 & 45.96 & 49.95 & 17.35 & 0.001 \\
  Flatrate & 60 & 2.50 & 38.49 & 44.07 & 76.28 & 25.75 & 0.002 \\
  Flatrate & 40 & 1.67 & 38.27 & 43.82 & 92.86 & 30.91 & 0.003 \\
  Flatrate & 20 & 0.83 & 36.64 & 41.96 & 123.96 & 43.84 & 0.008 \\
  \SetCell[r=6]{l}  MCTS & 360 & 15.00 & 44.01 & 45.48 & 18.98 & 7.15 & 1.467 \\
  MCTS & 180 & 7.50 & 43.71 & 44.85 & 36.67 & 12.01 & 5.012 \\
  MCTS & 120 & 5.00 & 43.15 & 45.22 & 49.95 & 16.12 & 10.905 \\
  MCTS & 60 & 2.50 & 40.68 & 44.20 & 76.28 & 23.59 & 39.413 \\
  MCTS & 40 & 1.67 & 39.99 & 44.13 & 92.86 & 27.86 & 93.557 \\
  MCTS & 20 & 0.83 & 38.51 & 42.75 & 123.96 & 38.75 & 450.417 \\
  \SetCell[r=3]{l} VI & 720 & 30.00 & 48.18 & 49.44 & 9.00 & 4.11 & 0.002 \\
  VI & 480 & 20.00 & 43.88 & 42.48 & 13.94 & 5.16 & 0.002 \\
  VI & 360 & 15.00 & 46.23 & 44.76 & 18.98 & 7.06 & 0.002 \\
  \hline
  \end{tblr}
\end{sidewaystable} 

We evaluate the proposed \MCTS/ pricing method in the experiments against the baselines described in \cref{sec:experimental_results}. 
First, we analyze the performance of the \MCTS/ algorithm in a set of small instances to select the best hyperparameters. 
Then, we compare the performance of the \MCTS/ algorithm with the flatrate, VI, and Oracle baselines in a set of different instances, ranging from instances with a small state space where VI can still generate results to instances with a larger state space.

\subsubsection{MCTS Hyperparameter Grid Search}

\begin{figure}
  \centering
  \includegraphics[width=0.99\textwidth]{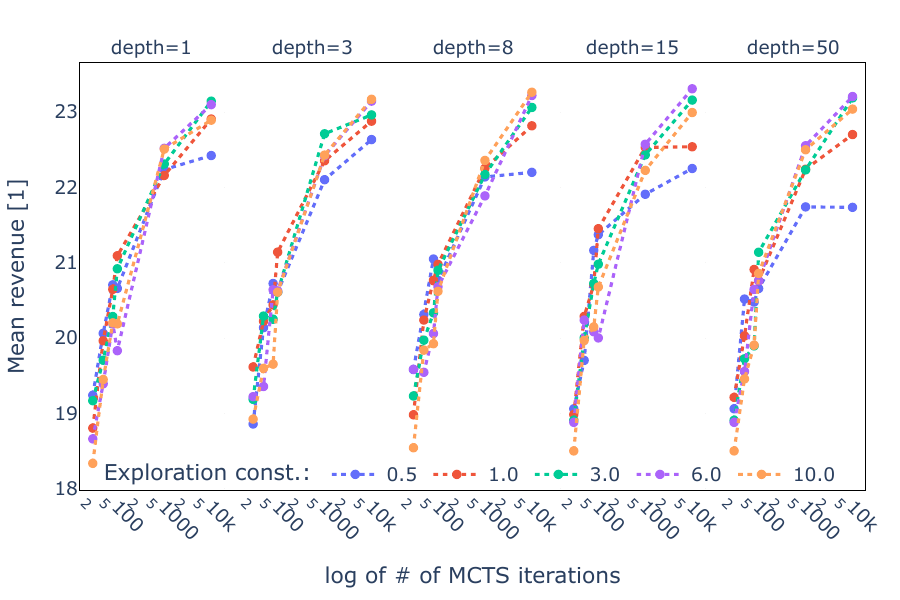}
  \caption{
    Figure showing the \emph{performance} of the \MCTS/ algorithm on the same problem instance with varied hyperparameters. The plots show means from \num{100} different request sequences. 
  } 
\label{fig:mcts_gridsearch}
\end{figure}

\begin{figure}
  \centering
  \includegraphics[width=0.99\textwidth]{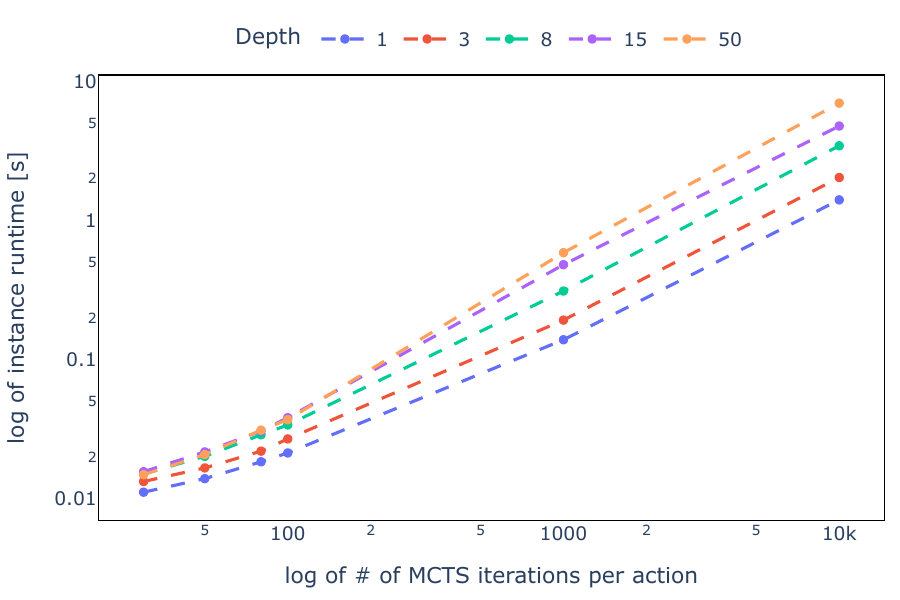}
  \caption{
    Figure showing the \emph{runtime} of the \MCTS/ algorithm on the same problem instance with varied hyperparameters and exploration constant fixed at \num{0.3}. Changing the exploration constant has a negligible effect on the resulting runtime. The plots show means from \num{100} different request sequences.
  } 
  \label{fig:mcts_runtime}
\end{figure}

The \MCTS/ algorithm (\cref{alg:mcts}) has three parameters, the number of iterations \(\iterLimit\), the exploration constant \(c\), and the tree depth limit \(\depthLimit\). 
To determine its hyperparameters, we use a modestly sized instance with \num{12} timeslots and \num{96} timesteps, three CSs, and expected \num{24} charging requests. 

We perform a full grid search with discretized sequences of these parameters. The results of the grid search are shown in \cref{fig:mcts_gridsearch}. 
The results are averaged from \num{100} sampled request sequences. 
The main takeaway from the grid search is that the number of iterations has the most significant impact on the algorithm's performance. 
The exploration constant and the tree depth limit have a much smaller impact. 
However, the number of iterations and the tree depth have the highest impact on the computation time, as seen in \cref{fig:mcts_runtime}.
Therefore, in the following experiments, we set the number of iterations to \(\iterLimit=10000\), the tree depth to \(\depthLimit=10\), and the exploration constant to \(c=3\).

\subsubsection{Varying Timeslot Length}

\begin{figure}
  \centering
  \includegraphics[width=0.99\textwidth]{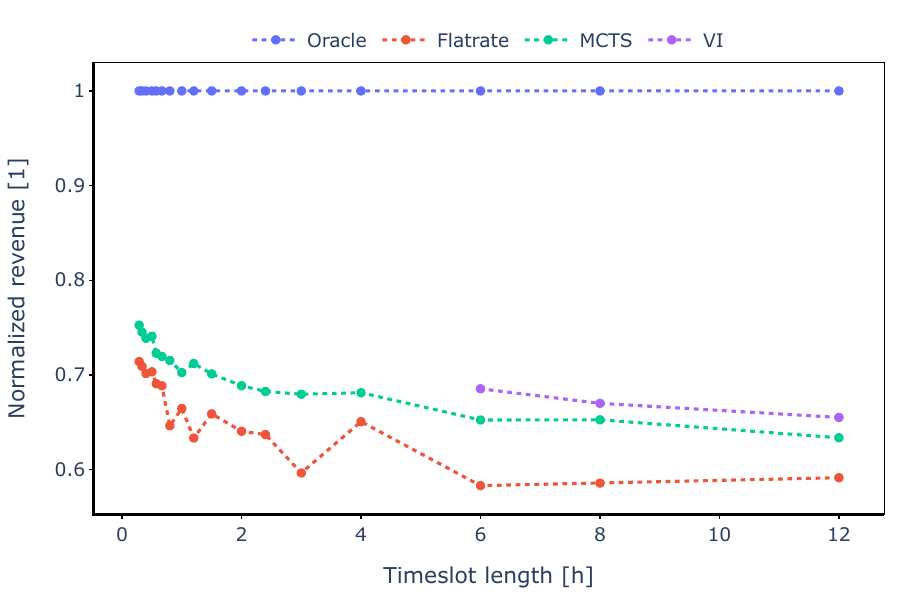}
  \caption{
      Performance of the evaluated methods when refining the discretization of the charging resources (and reducing timeslot length) in the \num{24} hour charging window with fixed expected charging demand (\num{24} hours of charging). VI runs out of memory for a timeslot shorter than 6 hours. The demand process discretization error is kept constant by increasing the number of timesteps. 
  }
  \label{fig:normalized_revenue_vs_timeslot_length}
\end{figure}

\begin{figure}
  \centering
  \includegraphics[width=0.99\textwidth]{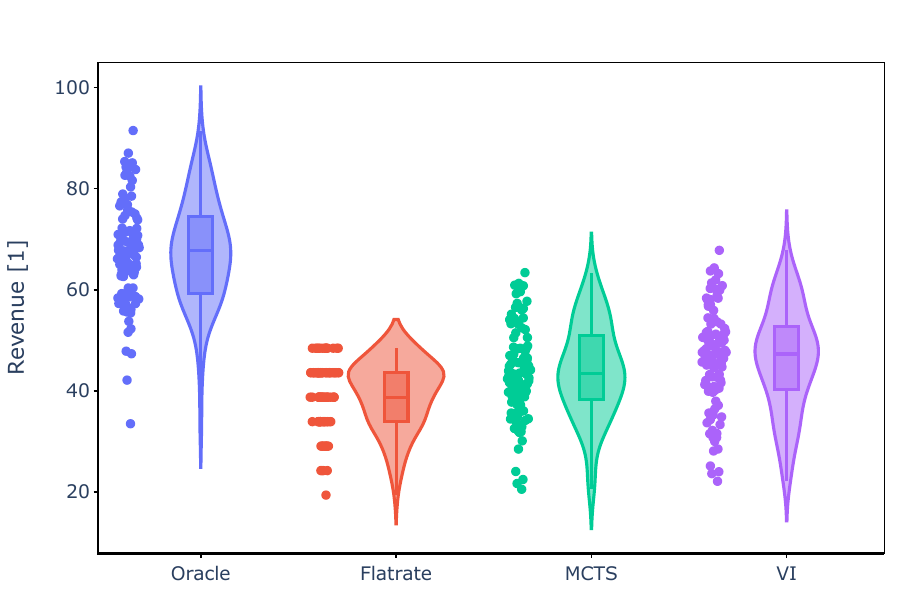}
  \caption{
    Violin plot showing the performance of the \MCTS/ algorithm compared to the flatrate, VI, and Oracle baselines in small instances. The vertical axis is the mean revenue averaged over 100 instances. The Oracle baseline provides a theoretical bound on the best achievable performance.
  } 
  \label{fig:violin_revenue}
\end{figure}

\begin{figure}
  \centering
  \includegraphics[width=0.99\textwidth]{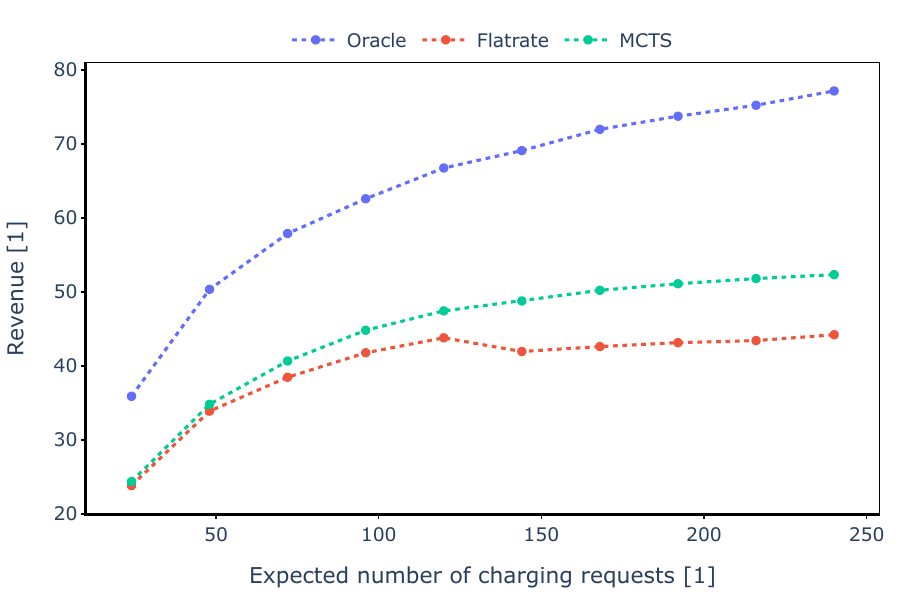}
  \caption{
    Performance of the evaluated methods when increasing demand.  Number of timeslots is fixed at \num{24}. VI does not run for these problem instances due to memory constraints.
  }
  \label{fig:revenue_vs_expected_res}
\end{figure}

\begin{figure*}

  \centering
  \def\svgwidth{0.99\textwidth}
  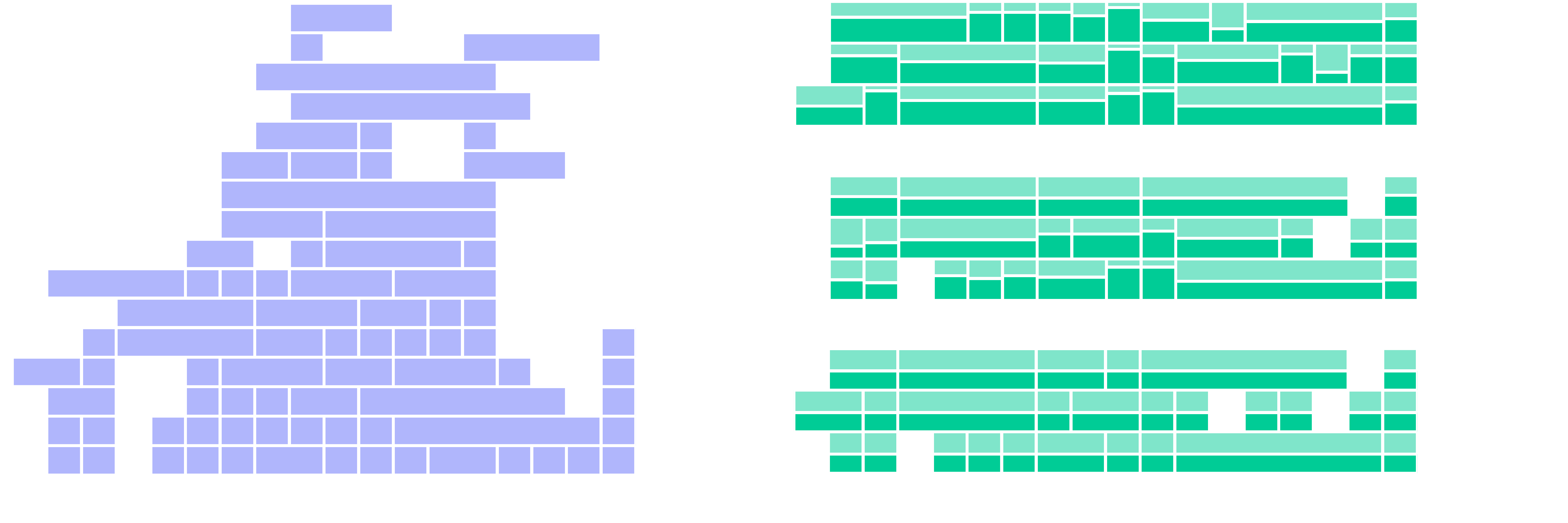
  \caption{
    Allocation of the charging requests in a single instance of the problem. The left-hand side (blue) is all the charging requests in the instance. The right-hand side shows the allocation of the requests by the different methods, with the revenue from each request illustrated by the proportion of the darker color. The horizontal axis is the time of the day; the vertical axis is the CS. The row also roughly corresponds to the order of request arrivals, that is, the later the charging request was made, the higher it is in the figure.
  } 
  \label{fig:trace_analysis}
\end{figure*}

In the next experiments, we evaluate the performance of the \MCTS/ algorithm against the flatrate, VI, and Oracle baselines in a set of instances with varying timeslot and timestep lengths to demonstrate the scalability of our approach.
The problem instance again models a CS with three chargers in a \num{24} hour day and expected demand for \num{24} hours of charging.
The number of charging timeslots is varied from \num{2} timeslots, each  \SI{12}{\hour},  to \num{240} timeslots at \SI{15}{\minute} each. 
The number of timesteps in each instance is then calculated using \cref{eq:relative_error} to make the relative discretization error constant at \num{0.06} missed charging requests per one expected request. 
Therefore, the number of timesteps varies between \num{16} timesteps at \SI{1.5}{\hour} each to \num{768} approximately \SI{2}{\minute} timesteps.

Detailed results are shown in \cref{tab:results}, the table shows averages from \num{100} instances.
For VI, we show all results; for the rest of the methods, we show results for instances with timeslots of length \SI{20}{\minute} and its multiples. 
\Cref{fig:normalized_revenue_vs_timeslot_length} shows the performance of the evaluated methods, averaged from \num{100} runs, normalized to the oracle baseline. 
The optimal VI baseline runs out of memory for more than \num{4} resources.
Performance of all methods improves with as timeslot length decreases. 
The flatrate baseline performs the worst, as expected. 
The \MCTS/ algorithm keeps close to the performance of the VI baseline and outperforms the flatrate baseline. 

For the \num{100} instances with \SI{6}{\hour} timeslots, we also give a sense of the variance of the results in \cref{fig:violin_revenue}, which shows that the \MCTS/ algorithm, stochastic by nature, does not significantly increases the variance of the results over the other baselines.

\subsubsection{Varying Expected Demand}
For the second experiment, we fix the timeslot length at \SI{1}{\hour}, and we vary the expected number of requests. 
The results of this experiment are presented in \cref{fig:revenue_vs_expected_res}. 
As can be expected, the revenue of all the pricing methods increases with increasing demand, and the gap between \MCTS/ and flatrate grows with demand. 
The VI baseline does not run in these instances. 

The number of expected requests directly influences the error caused by the discretization of the demand process (see \cref{eq:relative_error}). Therefore, to control the error, we have to vary the number of timesteps in this experiment. We calculate the number of timesteps and their length to keep the relative error constant again at \num{0.06} ignored charging requests per one expected request. 
This results in \num{192} timesteps at \SI{7.5}{\minute} each for the instances with the smallest demand, and \num{1920} timesteps, each \SI{45}{\second}, for the instances with the highest number of expected requests.

\subsubsection{Focusing on Single Instance}

The mechanism by which the \MCTS/ and Oracle baseline outperform the flatrate is illustrated in \cref{fig:trace_analysis}. The figure shows the allocation of the charging requests in a single instance of the problem with a timeslot length of one hour. In this high-demand instance, all pricing strategies allocate almost all of the available capacity (3 CSs). However, the \MCTS/ algorithm and the Oracle baseline manage to extract more revenue either from the same requests or from requests that the flatrate baseline does not allocate as it runs out of capacity earlier. 

\section{Conclusions}
In conclusion, we have described an \MDP/ based model for dynamic pricing of \emph{reserved EV charging  as complete service}, including charging, reservation fee, parking, etc. This type of charging service is not well-studied in the literature, but could be an important part of the future EV charging infrastructure as more EVs get on roads. 
By pricing the service reservations dynamically, the CS operator can maximize his revenue while providing a guaranteed charging service to EV drivers who plan their trips in advance. 

Our pricing relies customer demand modelled as a Poisson process. 
For use in an \MDP/ we need to discretize the process into timesteps. 
We show that the discretization introduces error, and as a novel contribution, we quantify this error and show how it can be controlled by increasing the number of timesteps.
In experiments, we show how we can control the tradeoff between this approximation error and computational complexity.

To find a dynamic pricing policy from \MDP/ model, we propose a novel \MCTS/ heuristic. 
We show that this pricing algorithm outperforms the flatrate baseline, and is competitive with the optimal VI baseline in instances where this baseline is available. 
We also show that this \MCTS/ algorithm is scalable to larger instances and that its performance does not deteriorate with regard to the globally optimal Oracle baseline.

We believe that both the model and the approximation error term we introduced in this work provide a practical tool for future researchers working on similar models.  
For the future work, we see two main directions. 
First, we would like to evaluate the performance of the pricing algorithm on the real-world data, if appropriate training data becomes available.
This will allow us to verify the impact of the discretization error on the performance of the pricing algorithm. 
Second, we would like to validate the approach of dynamic pricing of the reserved charging services in combination with the multi-criterial optimization of the driver trips~\cite{cuchyMultiObjectiveElectricVehicle2024}. 
Joining the two techniques in a multi-agent simulation, such as in~\cite{basmadjianReferenceArchitectureInteroperable2020}, would allow us to evaluate the impact of our pricing scheme on the drivers as well as the CS operator.

\section{Acknowledgments}
This work was supported by the project of CTU no. SGS/22/168/OHK3/3T/13, ``Decision-making in large, dynamic multi-agent scenarios 2''and the Czech Science Foundation (grant no. GA25-18353S).



\printbibliography

\end{document}